\numberwithin{equation}{section}
\tikzset{
  photon/.style={decorate, decoration={snake}, draw=black},
  fermion/.style={draw=black, postaction={decorate},decoration={markings,mark=at position .55 with {\arrow{>}}}},
  fermion2/.style={dashed, dash phase=0.1pt, draw=black, postaction={decorate},decoration={markings,mark=at position .55 with {\arrow{>}}}},
  vertex/.style={draw,shape=circle,fill=black,minimum size=5pt,inner sep=0pt},
particle/.style={thick,draw=black},
particle2/.style={thick,draw=blue},
avector/.style={thick,draw=black, postaction={decorate},
    decoration={markings,mark=at position 1 with {\arrow[black]{triangle 45}}}},
gluon/.style={decorate, draw=black,
    decoration={coil,aspect=0}}
 }
\NewDocumentCommand\semiloop{O{black}mmmO{}O{above}}
{%
\draw[#1] let \p1 = ($(#3)-(#2)$) in (#3) arc (#4:({#4+180}):({0.5*veclen(\x1,\y1)})node[midway, #6] {#5};)
}
\theoremstyle{plain}
\newtheorem{thm}{Theorem}[section]
\theoremstyle{definition}
\newtheorem*{thm*}{Theorem}
\newtheorem*{lem*}{Lemma}
\newtheorem*{prop*}{Proposition}
\newtheorem*{cor*}{Corollary}
\newtheorem*{exe*}{Exercise}
\newtheorem*{defn*}{Definition}
\newtheorem{rem}[thm]{Remark}
\theoremstyle{remark}
\newcommand{\R}{\mathbb{R}}
\newcommand{\E}{\mathbb{E}}
\newcommand{\calY}{\mathcal{Y}}
\newcommand{\dd}{{\mathrm{d}}}
\newcommand{\id}{\mathrm{id}}
\DeclareMathOperator{\tr}{Tr}
\DeclareMathOperator{\gh}{gh}
\DeclareMathOperator{\End}{End}
\DeclareMathOperator{\Hom}{Hom}
\newcommand{\de}{\partial}
\newcommand{\calB}{\mathcal{B}}
\newcommand{\calH}{\mathcal{H}}
\newcommand{\calS}{\mathcal{S}}
\newcommand{\calC}{\mathcal{C}}
\newcommand{\calO}{\mathcal{O}}
\newcommand{\calL}{\mathcal{L}}
\newcommand{\calM}{\mathcal{M}}
\newcommand{\calF}{\mathcal{F}}
\def\gpd{\,\lower1pt\hbox{$\longrightarrow$}\hskip-.24in\raise2pt
               \hbox{$\longrightarrow$}\,}
\newcommand{\I}{\mathrm{i}}
\newcommand{\calV}{\mathcal{V}}
\newcommand{\Tr}{\textnormal{Tr}}
\begin{document}


\title{On Quantum Obstruction Spaces and Higher Codimension Gauge Theories}

\author{Nima Moshayedi}
 \email{nima.moshayedi@math.uzh.ch}
\affiliation{Institut f\"ur Mathematik, Universit\"at Z\"urich, Winterthurerstrasse 190, CH-8057 Z\"urich
}%




\date{\today}

\begin{abstract}
Using the quantum construction of the BV-BFV method for perturbative gauge theories, we show that the obstruction for quantizing a codimension 1 theory is given by the second cohomology group with respect to the boundary BRST charge. Moreover, we give an idea for the algebraic construction of codimension $k$ quantizations in terms of $\E_k$-algebras and higher shifted Poisson structures by formulating a higher version of the quantum master equation.
\begin{description}
\item[Keywords] Quantum Field Theory, Gauge Theory, BV-BFV Formalism, Deformation Quantization,\\Extended Field Theory, Shifted Poisson Structures, Higher Categories

\end{description}
\end{abstract}

\maketitle

\section{Introduction}
The Batalin--Vilkovisky formalism \cite{BV1,BV2,BV3} is a powerful method to deal with perturbative quantizations of local gauge theories. The extension of this formalism to manifolds with boundary combines the Lagrangian approach of the Batalin--Vilkovisky (BV) formalism in the bulk with the Hamiltonian approach of the Batalin--Fradkin--Vilkovisky (BFV) formalism \cite{FV1,BF1} on the boundary of the underlying source (spacetime) manifold. This construction is known as the BV-BFV formalism \cite{CMR1,CMR2,CattMosh1}. In particular, it describes a codimension 1 quantum gauge formalism. Within a classical gauge theory one is interested in describing the obstructions for it to be quantizable. The cohomological symplectic formulation suggests an operator quantization for the boundary action. To get a well-defined and consistent cohomology theory, one has to require that this induced operator squares to zero. This will lead to obstruction spaces for boundary theories by considering a deformation quantization of the boundary action in order to formulate a boundary version of the quantum master equation as the gauge-independence condition. 
We will show that the obstruction for the quantization of manifolds with boundary is controlled by the second cohomology group with respect to the cohomological vector field on the boundary fields.  
Moreover, we formulate a classical extension of higher codimension $k$ theories as in \cite{CMR1} which we call BF$^k$V theories. The coupling for each stratum, in fact, is easily extended in the classical setting (BV-BF$^k$V theories), whereas for the quantum setting it might be rather involved. In order to formulate a fully extended topological quantum field theory in the sense of Baez--Dolan \cite{BaezDolan1995} or Lurie \cite{Lurie2009}, the coupling is indeed necessary.
Since one layer of the quantum picture, namely the quantum master equation, is described in terms of deformation quantization, we can formulate an algebraic approach for the higher codimension extension in terms of $\E_k$- and $\mathbb{P}_k$-algebras \cite{Lurie2017,Safronov2018}. Here $\E_k$ denotes the $\infty$-operad of little $k$-dimensional disks \cite{Lurie2017,Kontsevich1999,FresseWillwacher2020}.
Moving to one codimension higher corresponds to the shift of the Poisson structure by $-1$ since the symplectic form is shifted by $+1$ (see \cite{PantevToenVaquieVezzosi2013} for the shifted symplectic setting). This is controlled by the operad $\mathbb{P}_k$ on codimension $k$ which corresponds to $(1-k)$-shifted Poisson structures \cite{CalaquePantevToenVaquieVezzosi2017,Safronov2017}.
Using this notion, we give some ideas for the quantization in higher codimension. Moreover, if one uses the notion of Beilinson--Drinfeld ($\mathbb{BD}$) algebras \cite{BeilinsonDrinfeld2004,CostelloGwilliamVol2}, in particular $\mathbb{BD}_0$- and $\mathbb{BD}_1$-algebras, one can try to consider the action of $\mathbb{P}_0\cong\mathbb{BD}_0/\hbar$ (for $\hbar\to 0$) on $\mathbb{P}_1\cong\mathbb{BD}_1/\hbar$ (for $\hbar\to0$) in order to capture the algebraic structure of the classical bulk-boundary coupling (see also \cite[Section 5]{Safronov2017}). Here $\cong$ denotes an isomorphism of operads. 
In general, one can define the $\mathbb{BD}_k$ operads to provide a certain interpolation between the $\mathbb{P}_k$ and $\E_k$ operads in the sense that they are graded Hopf \cite{LivernetPatras2008} differential graded (dg) operads over $\mathbf{K}[\![\hbar]\!]$, where $\hbar$ is of weight $+1$ and $\mathbf{K}$ a field of characteristic zero, together with the equivalences 
\[
\mathbb{BD}_k/\hbar\cong\mathbb{P}_k,\qquad \mathbb{BD}_k[\![\hbar^{-1}]\!]\cong \E_k(\!(\hbar)\!).
\]
The formality of the $\E_k$ operad \cite{Tamarkin2003,Kontsevich1999,FresseWillwacher2020} implies the equivalence $\mathbb{BD}_k\cong \mathbb{P}_k[\![\hbar]\!]$.
There is a formulation of a $\mathbb{BD}_2$-algebra in terms of brace algebras \cite{CalaqueWillwacher2015,Safronov2018} and one can show that there is in fact a quasi-isomorphism $\mathbb{P}_2\cong \mathbb{BD}_2/\hbar$ (for $\hbar\to 0$).
However, the notion of a $\mathbb{BD}_k$-algebra for $k\geq3$ in terms of braces is currently not defined, but there should not be any obstruction to do this. Using these operads, one can define a deformation quantization of a $\mathbb{P}_{k+1}$-algebra $A$ to be a $\mathbb{BD}_{k+1}$-algebra $A_\hbar$ together with an equivalence of $\mathbb{P}_{k+1}$-algebras $A_\hbar/\hbar\cong A$ (see \cite{CalaquePantevToenVaquieVezzosi2017,MelaniSafronov2018} for a detailed discussion). 
\subsection*{Notation and conventions}
We will denote functions on a manifold $M$ by $\calO(M)$. 
Vector fields on $M$ will be denoted by $\mathfrak{X}(M)$ and the space of differential $k$-forms on $M$ by $\Omega^k(M)$. We denote by $A[\![t]\!]$ the space of formal power series in a formal parameter $t$ with coefficients in some algebra $A$. The imaginary unit is denoted by $\I:=\sqrt{-1}$. If the manifolds are infinite-dimensional, they are usually \emph{Banach} or \emph{Fr\'echet} manifolds. The ring of integers will be denoted by $\mathbf{Z}$. Real and complex numbers will be denoted by $\mathbf{R}$ and $\mathbf{C}$ respectively. A general field of characteristic zero will be denoted by $\mathbf{K}$.

\section{Obstruction spaces for quantization on manifolds with boundary}
\subsection{Classical BV theories}
\label{subsec:BV}
We start with the BV approach for the bulk theory. 
A BV manifold is a triple 
\[
(\calF,\calS,\omega)
\]
such that $\calF$ is a $\mathbf{Z}$-graded supermanifold, $\calS\in \calO(\calF)$ is an even function of degree 0, and $\omega\in \Omega^2(\calF)$ an odd symplectic form of degree $-1$. 
The $\mathbf{Z}$-grading corresponds to the \emph{ghost number} which we will denote by ``$\mathrm{gh}$''. The BV space of fields $\calF$ is usually given as the $(-1)$-shifted cotangent bundle of the BRST space of fields, i.e. $\calF_{\mathrm{BV}}:=T^*[-1]\calF_{\mathrm{BRST}}$. In many cases, $\calF$ is an infinite-dimensional Fr\'echet manifold. 
Denote by $Q$ the Hamiltonian vector field of $\calS$ of degree $+1$, i.e. $\iota_Q\omega=\delta\calS$, where $\delta$ denotes the de Rham differential on $\calF$.
If we denote by $(\enspace,\enspace)$ the odd Poisson bracket induced by the odd symplectic form $\omega$ (also called the \emph{anti bracket}, or \emph{BV bracket}), we get 
\[
Q=(\calS,\enspace).
\]
Note that, by definition, $Q$ is \emph{cohomological}, i.e. $Q^2=0$.
Moreover, $Q$ is a \emph{symplectic} vector field, i.e. $L_Q\omega=0$, where $L$ denotes the Lie derivative.
For a BV theory we require the \emph{classical master equation (CME)} 
\begin{equation}
Q(\calS)=(\calS,\calS)=0
\end{equation}
to hold.
The assignment $\Sigma\mapsto (\calF_\Sigma,\calS_\Sigma,\omega_\Sigma)$ of a (usually, closed compact oriented) manifold $\Sigma$ to a BV manifold is called a \emph{BV theory}. By the physical property of \emph{locality}, given a BV theory, we usually want to work over \emph{local functions} on $\calF_\Sigma$, which we denote by $\calO_{loc}(\calF_\Sigma)\subset\calO(\calF_\Sigma)$. These are defined by functions on $\calF_\Sigma$ of the form
\[
\Phi\mapsto \int_{x\in\Sigma}\mathscr{L}\big(x,\Phi(x),\de\Phi(x),\de^2\Phi(x),\ldots,\de^N\Phi(x)\big),
\]
where $\Phi\in\calF_{\Sigma}$ denotes some field configuration and $\mathscr{L}$ denotes the \emph{Lagrangian density} of the given theory which depends on $\Phi$ and higher derivatives for $N\in\mathbf{Z}_{>0}$.

\subsection{Examples of classical BV theories}
We want to give some examples of non-reduced classical BV theories. For the reduced case see \cite{CMR1}.

\subsubsection{Electrodynamics}
\label{subsec:Electrodynamics_BV}
We want to consider the (minimal) BV extension of classical Euclidean electrodynamics for a trivial $U(1)$-bundle. Let $\Sigma$ be a smooth oriented $n$-dimensional Riemannian manifold. Denote by $*\colon \Omega^j(\Sigma)\to \Omega^{n-j}(\Sigma)$ the \emph{Hodge star} induced by the metric on $\Sigma$. The BV space of fields is then given by the shifted cotangent bundle $T^*[-1]E_\Sigma$, where 
\[
E_\Sigma:=\Omega^1(\Sigma)\oplus\Omega^{n-2}(\Sigma)\oplus \Omega^0(\Sigma)[1].
\]
The first term of $E_\Sigma$ denotes the space of connections $A$ of a trivial $U(1)$-bundle over $\Sigma$, the second term denotes the the space of the Hamiltonian counterpart of those connections (i.e. the momentum), which we denote by $B$, and the third term denotes the space of ghost fields $c$. Hence, we have 
\begin{multline*}
\calF_\Sigma:=T^*[-1]E_\Sigma=\Omega^1(\Sigma)\oplus\Omega^{n-2}(\Sigma)\oplus\\ \oplus\Omega^0(\Sigma)[1]\oplus\Omega^{n-1}(\Sigma)[-1]\oplus\Omega^2(\Sigma)[-1]\oplus\Omega^n(\Sigma)[-2]
\end{multline*}
We denote a field in $\calF_\Sigma$ by $(A,B,c,A^+,B^+,c^+)$. Then the BV symplectic form is given by 
\[
\omega_\Sigma=\int_\Sigma\left(\delta A\land \delta A^++\delta B\land \delta B^++\delta c\land \delta c^+\right).
\]
The BV action is given by 
\[
\calS_\Sigma=\int_\Sigma\left(B\land F_A+\frac{1}{2}B\land *B+A^+\land \dd c\right),
\]
where $F_A:=\dd A$, denotes the curvature of the connection $A$. The cohomological vector field is given by 
\begin{multline*}
Q_\Sigma=\int_\Sigma\bigg(\dd c\land \frac{\delta}{\delta A}+\dd B\land \frac{\delta}{\delta A^+}+\\+(*B+\dd A)\land\frac{\delta}{\delta B^+}+\dd A^+\land \frac{\delta}{\delta c^+}\bigg).
\end{multline*}
In particular, we have the following symmetries:
\begin{align*}
    Q_\Sigma(A)&=\dd c,\\
    Q_\Sigma (A^+)&=\dd B,\\
    Q_\Sigma (B^+)&=*B+\dd A,\\
    Q_\Sigma (c^+)&=\dd A^+,
\end{align*}
and $Q_\Sigma(B)=Q_\Sigma(c)=0$. It is easy to show that $Q^2=0$ and that the CME is indeed satisfied, i.e. $Q_\Sigma(\calS_\Sigma)=0$.

\subsubsection{Yang--Mills theory}
\label{subsubsec:Yang-Mills_theory}
Consider an $n$-dimensional closed oriented compact smooth Riemannian manifold $\Sigma$. Let $\mathfrak{g}$ be the Lie algebra of a finite-dimensional simply connected Lie group $G$ endowed with a $\mathfrak{g}$-invariant inner product given by $\langle g,h\rangle:=\Tr(gh)$. Moreover, let $P$ be principal $G$-bundle over $\Sigma$ and assume for simplicity that $P$ is trivial. The BV space of fields is given by 
\begin{multline*}
    \calF_\Sigma:=\Omega^1(\Sigma)\otimes \mathfrak{g}\oplus\Omega^{n-2}(\Sigma)\otimes \mathfrak{g}\oplus\\\oplus \Omega^0(\Sigma)\mathfrak{g}[1]\oplus\Omega^{n-1}(\Sigma)\otimes \mathfrak{g}[-1]\oplus\\\oplus\Omega^2(\Sigma)\otimes \mathfrak{g}[-1]\oplus\Omega^n(\Sigma)\otimes \mathfrak{g}[-2]
\end{multline*}
We denote a field in $\calF_\Sigma$ by components $(A,B,c,A^+,B^+,c^+)$. The BV symplectic form is given by 
\[
\omega_\Sigma=\int_\Sigma\Tr\big(\delta A\land \delta A^++\delta B\land \delta B^++\delta c\land \delta c^+\big)
\]
and the BV action by 
\begin{multline*}
\calS_\Sigma=\int_\Sigma\tr\bigg(B\land F_A+\frac{1}{2}B\land *B+\\+A^+\land \dd_Ac+B^+\land [B,c]+\frac{1}{2}c^+\land[c,c]\bigg),
\end{multline*}
where $F_A:=\dd A+\frac{1}{2}[A,A]$ denotes the curvature of the connection $A$ and $\dd_A$ is the covariant derivative for $A$. The cohomological vector field is given by 
\begin{multline*}
    Q_\Sigma=\int_\Sigma\bigg(\dd_Ac\land\frac{\delta}{\delta A}+[B,c]\land \frac{\delta}{\delta B}+\frac{1}{2}[c,c]\land \frac{\delta}{\delta c}+\\+(\dd_AB+[A^+,c])\land \frac{\delta}{\delta A^+}+(F_A+*B+[B^+,c])\land \frac{\delta}{\delta B^+}+\\+(\dd_AA^++[B,B^+]+[c,c^+])\land \frac{\delta}{\delta c^+}\bigg).
\end{multline*}
In particular, we have the following symmetries:
\begin{align*}
    Q_\Sigma(A)&=\dd_Ac,\\
    Q_\Sigma(B)&=[B,c],\\
    Q_\Sigma(c)&=\frac{1}{2}[c,c],\\
    Q_\Sigma(A^+)&=\dd_AB+[A^+,c],\\
    Q_\Sigma(B^+)&=F_A+*B+[B^+,c],\\
    Q_\Sigma(c^+)&=\dd_AA^++[B,B^+]+[c,c^+].
\end{align*}

\subsubsection{Chern--Simons theory}
\label{subsubsec:Chern-Simons}
Let $\Sigma$ be a $3$-dimensional closed compact oriented smooth manifold and let $\mathfrak{g}$ be the Lie algebra of a Lie group $G$ endowed with an invariant inner product (e.g. a simple Lie algebra). Denote by $\Tr(gh)$ the Killing form for two elements $g,h\in \mathfrak{g}$.
The space of fields is given by graded connections on a principal $G$-bundle. For simplicity, we assume that the bundle is trivial. Then the BV space of fields is given by 
\[
\calF_\Sigma:=\Omega^\bullet(\Sigma)\otimes\mathfrak{g}[1]=\bigoplus_{j=0}^3\Omega^{j}(\Sigma)\otimes\mathfrak{g}[1].
\]
A field in $\calF_\Sigma$ will be denoted by the tuple $(c,A,A^+,c^+)$. Note that the ghost numbers are $1,0,-1,-2$ respectively. Consider the \emph{superfield} $\mathbf{A}:=c+A+A^++c^+$. Then the BV symplectic form is given by 
\[
\omega_\Sigma=\frac{1}{2}\int_\Sigma\Tr(\delta\mathbf{A}\land \delta\mathbf{A})=\int_\Sigma\Tr(\delta c\land \delta c^++\delta A\land \delta A^+). 
\]
The cohomological vector field is given by 
\begin{multline*}
Q_\Sigma=\int_\Sigma\Tr\left(\left(\dd\mathbf{A}+\frac{1}{2}[\mathbf{A},\mathbf{A}]\right)\land\frac{\delta}{\delta\mathbf{A}}\right)=\\
=\int_\Sigma\Tr\bigg(\dd_Ac\land\frac{\delta}{\delta A}+(F_A+[c,A^+])\land\frac{\delta}{\delta A^+}+\\+(\dd_A A^++[c,c^+])\land \frac{\delta}{\delta c^+}+\frac{1}{2}[c,c]\land \frac{\delta}{\delta c}\bigg).
\end{multline*}
In particular, we have the following symmetries:
\begin{align*}
    Q_\Sigma(A)&=\dd_Ac,\\
    Q_\Sigma(c)&=\frac{1}{2}[c,c],\\
    Q_\Sigma(A^+)&=F_A+[c,A^+],\\
    Q_\Sigma(c^+)&=\dd_AA^++[c,c^+].
\end{align*}
The BV action is given by 
\begin{multline*}
    \calS_\Sigma=\int_\Sigma\Tr\bigg(\frac{1}{2}\mathbf{A}\land\dd\mathbf{A}+\frac{1}{6}\mathbf{A}\land[\mathbf{A},\mathbf{A}]\bigg)=\\
    =\int_\Sigma\Tr\bigg(\frac{1}{2}A\land\dd A+\frac{1}{6}A\land[A,A]+\frac{1}{2}A^+\land\dd_Ac+\\+\frac{1}{2}c\land \dd_AA^++\frac{1}{2}c^+\land[c,c]\bigg)
\end{multline*}

\subsubsection{(Abelian) $BF$ theory}
\label{subsubsec:BF}
Let us first consider abelian $BF$ theory. Let $\Sigma$ be an $n$-dimensional closed compact oriented smooth manifold. The space of fields is given by 
\[
\calF_\Sigma:=\Omega^\bullet(\Sigma)[1]\oplus\Omega^\bullet(\Sigma)[n-2].
\]
We denote the superfields by $\mathbf{A}\in\Omega^\bullet(\Sigma)[1]$ and $\mathbf{B}\in\Omega^\bullet(\Sigma)[n-2]$. The BV symplectic form is then given by
\[
\omega_\Sigma=\int_\Sigma \delta\mathbf{A}\land \delta\mathbf{B}.
\]
The BV action is given by 
\[
\calS_\Sigma=\int_\Sigma\mathbf{B}\land \dd\mathbf{A}.
\]
The cohomological vector field is given by 
\[
Q_\Sigma=\int_\Sigma\left(\dd\mathbf{A}\land \frac{\delta}{\delta\mathbf{A}}+\dd\mathbf{B}\land \frac{\delta}{\delta\mathbf{B}}\right).
\]
Note that $Q_\Sigma(\mathbf{A})=\dd\mathbf{A}$ and $Q_\Sigma(\mathbf{B})=\dd\mathbf{B}$. Now let us consider the case of non-abelian $BF$ theory, i.e. we consider a finite-dimensional Lie algebra $\mathfrak{g}$ with invariant inner product. The BV space of fields is then given by 
\[
\calF_\Sigma:=\Omega^\bullet(\Sigma)\otimes \mathfrak{g}[1]\oplus\Omega^\bullet(\Sigma)\otimes\mathfrak{g}[n-2]\ni(\mathbf{A},\mathbf{B}).
\]
The BV symplectic form is given by 
\[
\omega_\Sigma=\int_\Sigma\Tr(\delta \mathbf{B}\land\delta\mathbf{A}).
\]
The cohomological vector field is given by 
\[
Q_\Sigma=\int_{\Sigma}\Tr\bigg(\bigg(\dd\mathbf{A}+\frac{1}{2}[\mathbf{A},\mathbf{A}]\bigg)\land \frac{\delta}{\delta\mathbf{A}}+\dd_\mathbf{A}\mathbf{B}\land\frac{\delta}{\delta\mathbf{B}}\bigg).
\]
In particular, we have the following symmetries:
\begin{align*}
    Q_\Sigma(\mathbf{A})&=\dd\mathbf{A}+\frac{1}{2}[\mathbf{A},\mathbf{A}],\\
    Q_\Sigma(\mathbf{B})&=\dd_\mathbf{A}\mathbf{B}.
\end{align*}
The BV action is given by 
\[
\calS_\Sigma=\int_\Sigma\Tr\bigg(\mathbf{B}\land\bigg(\dd\mathbf{A}+\frac{1}{2}[\mathbf{A},\mathbf{A}]\bigg)\bigg).
\]

\begin{rem}
\label{rem:AKSZ}
Note that non-abelian $BF$ theory reduces to the abelian one when $\mathfrak{g}=\mathbf{R}$. In fact, abelian $BF$ theory is given by two copies of abelian Chern--Simons theory (i.e. the theory described in \ref{subsubsec:Chern-Simons} when $\mathfrak{g}=\mathbf{R}$). Moreover, (abelian) $BF$ theory and Chern--Simons theory are examples of a more general type of theory, called \emph{AKSZ theory} \cite{AKSZ}, which forms a subclass for BV theories. Other examples of AKSZ theories include the \emph{Poisson sigma model} \cite{I,SS1,CF4,CF1}, \emph{Witten's $A$- and $B$-model} \cite{Witten1988a,AKSZ}, \emph{Rozansky--Witten theory} \cite{RozanskyWitten1997,QiuZabzine2009}, \emph{Donaldson--Witten theory} \cite{Witten1989,Ikeda2011}, the \emph{Courant sigma model} \cite{Roytenberg2005,CattaneoQiuZabzine2010}, and \emph{2D Yang--Mills theory} \cite{CMR2,IM}, 
\end{rem}

\subsection{Obstruction space in the bulk}
\label{subsec:bulk}
It is well known that the obstruction space for quantization in the BV formalism is given by the first cohomology group with respect to $Q$. See e.g. \cite{BarnichDelMonte2018} and references therein.

\begin{thm}
\label{thm:obstruction_BV}
The obstruction space for a BV theory to be quantizable is given by
\begin{equation}
\label{eq:complex1}
\mathrm{H}^1_{Q}(\calO_{loc}(\calF)).
\end{equation}
\end{thm}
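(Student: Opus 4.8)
The plan is to run the standard order\Ndash by\Ndash order solution of the quantum master equation (QME) inside the algebra of local functionals, and to read off that the obstruction at each stage is a cohomology class of $Q$ in ghost degree $1$. Recall that quantizing the theory means producing $\calS_\hbar=\sum_{k\geq 0}\hbar^k\calS_k\in\calO_{loc}(\calF)[\![\hbar]\!]$ with $\calS_0=\calS$ solving
\[
\DBV\exp\!\Big(\tfrac{\I}{\hbar}\calS_\hbar\Big)=0 \iff \tfrac12(\calS_\hbar,\calS_\hbar)-\I\hbar\,\DBV\calS_\hbar=0,
\]
where $\DBV$ is the BV Laplacian associated to $\omega$, which we assume to be defined on the relevant functionals (e.g.\ after a choice of regularization, or in a finite\Ndash dimensional/effective model). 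Expanding the right\Ndash hand equation in powers of $\hbar$ turns it into a tower of equations for the $\calS_k$, solved inductively.

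First I would treat the lowest orders. The $\hbar^0$ term of the QME is the CME $(\calS,\calS)=0$, which holds by the definition of a BV theory, so there is no obstruction here. The $\hbar^1$ term reads $Q\calS_1=\I\,\DBV\calS$ with $Q=(\calS,\,\cdot\,)$. Using $\DBV^2=0$, the compatibility of $\DBV$ with the antibracket, and graded antisymmetry, one gets $Q(\DBV\calS)=(\calS,\DBV\calS)=-(\DBV\calS,\calS)=\mp\tfrac12\DBV(\calS,\calS)=0$, so $\DBV\calS$ is a $Q$\ndash cocycle of ghost degree $1$ and defines a class $[\DBV\calS]\in\mathrm{H}^1_Q(\calO_{loc}(\calF))$. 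Since $Q$ preserves $\calO_{loc}(\calF)$, the equation for $\calS_1$ admits a local solution if and only if this class vanishes.

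Then I would run the induction: assume $\calS_0,\dots,\calS_{n-1}\in\calO_{loc}(\calF)$ have been chosen so that the QME holds modulo $\hbar^n$. Collecting the $\hbar^n$ terms gives
\[
Q\calS_n=R_n,\qquad R_n:=\I\,\DBV\calS_{n-1}-\tfrac12\sum_{k=1}^{n-1}(\calS_k,\calS_{n-k}),
\]
and the crucial step is to check that $R_n$ is $Q$\ndash closed. This is the familiar telescoping computation: apply $Q$, expand using the graded Jacobi identity for the antibracket, $\DBV^2=0$, and the $\DBV$\Ndash antibracket compatibility, and use the induction hypothesis (the vanishing of the lower\Ndash order parts of $\tfrac12(\calS_\hbar,\calS_\hbar)-\I\hbar\,\DBV\calS_\hbar$) to cancel all remaining terms. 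Hence $R_n$ determines a class $[R_n]\in\mathrm{H}^1_Q(\calO_{loc}(\calF))$, and $\calS_n$ can be found as a local functional exactly when $[R_n]=0$; when it can, the remaining ambiguity in $\calS_n$ is a $Q$\ndash closed local functional. Consequently, all the obstructions to extending $\calS$ to a solution of the QME live in $\mathrm{H}^1_Q(\calO_{loc}(\calF))$, and the theory is quantizable precisely when each of them vanishes, which is the content of the statement.

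The step I expect to be delicate is the verification that $R_n$ is $Q$\ndash closed: it requires handling the Koszul signs in the graded BV\Ndash algebra identities carefully, although this is eased by the fact that every $\calS_k$ sits in ghost degree $0$. The deeper point hidden behind the clean formula is that $\DBV$ is typically ill\Ndash defined on $\calO_{loc}(\calF)$ in infinite dimensions, producing $\delta(0)$\ndash type divergences; the recursion must therefore be carried out after a regularization (or within an effective BV setup), the cohomological conclusion being independent of that choice.
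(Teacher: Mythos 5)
Your proposal follows essentially the same route as the paper: expand the quantum master equation order by order in $\hbar$, note that the $\hbar^0$ term is the CME, show that the inhomogeneous term at each subsequent order is $Q$-closed using $\Delta^2=0$ and the compatibility of $\Delta$ with the antibracket, and identify the obstruction to continuing the recursion with a class in $\mathrm{H}^1_Q(\calO_{loc}(\calF))$. If anything, your inductive formulation is slightly sharper than the paper's: the paper claims that exactness of $[\Delta\calS_0]$ ``automatically'' implies the higher-order equations, while its own second-order computation only establishes that $\Delta\calS_1-\tfrac12(\calS_1,\calS_1)$ is $Q$-closed, so, exactly as you set it up, there is in principle a separate class $[R_n]\in\mathrm{H}^1_Q(\calO_{loc}(\calF))$ to kill at every order, all living in the same cohomology group. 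Your closing caveat about $\Delta$ being ill-defined on local functionals without regularization matches the paper's own remark on half-densities and renormalization.
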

\begin{proof}
Consider a deformation of the BV action $\calS$, denoted by $\calS_\hbar$, depending on $\hbar$ and consider its expansion as a formal power series
\begin{multline}
\calS_\hbar:=\calS_0+\hbar\calS_1+\hbar^2\calS_2+O(\hbar^3)\\
=\sum_{k\geq 0}\hbar^k\calS_k\in\calO_{loc}(\calF)[\![\hbar]\!],
\end{multline}
where each $\calS_k\in\calO_{loc}(\calF)$ for all $k\geq 0$ and $\lim_{\hbar\to 0}\calS_\hbar=\calS$, i.e. $\calS_0:=\calS$. Note that $\gh\calS_k=0$ for all $k\geq 0$ since $\gh \calS=0$. 
For the quantum BV picture (see e.g. \cite{S,R}) one should note that there is a canonical second order differential operator $\Delta$ on $\calO_{loc}(\calF)$ such that $\Delta^2=0$. It is called BV Laplacian (see \cite{Khudaverdian2004,Severa2006} for a mathematical exposure). In particular, if $\Phi^i$ and $\Phi^+_i$ denote field and anti-field respectively, one can define $\Delta$ as
\[
\Delta f=\sum_i(-1)^{\gh \Phi^i+1}f\left\langle\frac{\overleftarrow{\delta}}{\delta\Phi^i},\frac{\overleftarrow{\delta}}{\delta\Phi^+_i}\right\rangle,\quad f\in \calO_{loc}(\calF).
\]
We have denoted by $\frac{\overleftarrow{\delta}}{\delta\Phi^{i}}$ and $\frac{\overrightarrow{\delta}}{\delta\Phi^{i}}$ the left and right derivatives with respect to $\Phi^{i}$. An analogue version also holds for the anti-fields $\Phi^+_i$. In fact, we have 
\begin{align}
\frac{\overrightarrow{\delta}}{\delta\Phi^{i}} f&=(-1)^{\gh\Phi^{i}(\gh f+1)}f\frac{\overleftarrow{\delta}}{\delta\Phi^{i}},\\
\frac{\overrightarrow{\delta}}{\delta\Phi^+_{i}} f&=(-1)^{(\gh\Phi^{i}+1)(\gh f+1)}f\frac{\overleftarrow{\delta}}{\delta\Phi^+_{i}}.
\end{align}
\begin{rem}
To be precise, for our constructions we want to consider a (global) BV Laplacian on \emph{half-densities} on $\calF$. It can be shown that for any odd symplectic supermanifold $\calF$ there exists a supermanifold $\calM$ such that $\calF\cong T^*[1]\calM$. Then $\calO(\calF)\cong \calO(T^*[1]\calM)=\Gamma(\bigwedge^\bullet T\calM)$. The \emph{Berezinian bundle} on $\calF$ is given by 
\[
\mathrm{Ber}(\calF)\cong\bigwedge^\mathrm{top}T^*\calM\otimes\bigwedge^\mathrm{top}T^*\calM.
\]
The half-densities on $\calF$ are defined by
\[
\mathrm{Dens}^\frac{1}{2}(\calF):=\Gamma\left(\mathrm{Ber}(\calF)^{\frac{1}{2}}\right). 
\]
One can then show that there is a canonical operator $\Delta^\frac{1}{2}_\calF$ on $\mathrm{Dens}^\frac{1}{2}(\calF)$ such that it squares to zero \cite{Khudaverdian2004}. At this point one should mention that this is only canonical in the finite-dimensional setting. For the infinite-dimensional case this only holds after a suitable \emph{renormalization}.
We can define a Laplacian by 
\[
\Delta_\sigma f:=\frac{1}{\sigma}\Delta^\frac{1}{2}_\calF(f\sigma),\quad f\in\calO(\calF),
\]
where $\sigma$ is a non-vanishing reference half-density on $\calF$ which is $\Delta^\frac{1}{2}_\calF$-closed. Note that $(\Delta_\sigma)^2=0$. We usually just write $\Delta\equiv \Delta_\sigma$ without mentioning $\sigma$. 
\end{rem}

To observe gauge-independence in the BV formalism, one requires the \emph{quantum master equation (QME)}
\begin{equation}
\label{QME}
\Delta\exp\left(\calS_\hbar/\hbar\right)=0\Longleftrightarrow(\calS_\hbar,\calS_\hbar)+2\hbar\Delta \calS_\hbar=0
\end{equation}
to hold. Here we denote by $\Delta$ the BV Laplacian.
Solving \eqref{QME} for each order in $\hbar$, we get the system of equations
\begin{align}
\label{eq:CME}
(\calS_0,\calS_0)&=0,\\
\label{eq:first}
\Delta\calS_0&=(\calS_0,\calS_1),\\
\label{eq:second}
\Delta \calS_1&=(\calS_0,\calS_2)+\frac{1}{2}(\calS_1,\calS_1)\\
&\quad\vdots
\end{align}
Note that Equation \eqref{eq:CME} is the CME which we assume to hold. Then, using the CME and the formula
\[
\Delta(f,g)=(f,\Delta g)-(-1)^{\gh g}(\Delta f,g),\quad \forall f,g\in \calO_{loc}(\calF), 
\]
we get 
\[
0=\Delta(\calS_0,\calS_0)=(\calS_0,\Delta\calS_0)=Q(\Delta\calS_0).
\]
Hence $\Delta \calS_0$ is closed with respect to the coboundary operator $Q=(\calS_0,\enspace)$. Moreover, if we assume that it is also $Q$-exact, we get that there is some $\calS_1\in\calO_{loc}(\calF)$ such that $\Delta\calS_0=Q(\calS_1)=(\calS_0,\calS_1)$, which is exactly the statement of Equation \eqref{eq:first}. This will automatically imply that all the higher order equations hold. Indeed, if $\Delta\calS_1=(\calS_0,\calS_1)$ for some $\calS_1\in\calO_{loc}(\calF)$, we get
\begin{multline}
\label{eq:second_degree}
0=\Delta\underbrace{(\calS_0,\calS_1)}_{\Delta\calS_0}=(\Delta\calS_0,\calS_1)-(-1)^{\gh\calS_0}(\calS_0,\Delta\calS_1)\\
=((\calS_0,\calS_1),\calS_1)-(\calS_0,\Delta\calS_1),
\end{multline}
where we used $\Delta^2=0$. Using the graded Jacobi formula for the BV bracket, we get
\begin{multline}
((\calS_0,\calS_1),\calS_1)=(\calS_0,(\calS_1,\calS_1))-\\
-(-1)^{(\gh\calS_0-1)(\gh\calS_1-1)}(\calS_1,(\calS_0,\calS_1)).
\end{multline}
Furthermore, by graded commutativity of the BV bracket we have
\begin{multline}
((\calS_0,\calS_1),\calS_1)=\\
=-(-1)^{(\gh(\calS_0,\calS_1)-1)(\gh\calS_1-1)}(\calS_1,(\calS_0,\calS_1)).
\end{multline}
Now since 
\[
\gh(\calS_0,\calS_1)=\gh\calS_0+\gh\calS_1+\gh(\enspace,\enspace)
\]
we get 
\[
2((\calS_0,\calS_1),\calS_1)=(\calS_0,(\calS_1,\calS_1)).
\]
Hence, using Equation \eqref{eq:second_degree}, we get
\[
(\calS_0,\Delta\calS_1)=\left(\calS_0,\frac{1}{2}(\calS_1,\calS_1)\right).
\]
This will give us 
\[
\Delta\calS_1=\frac{1}{2}(\calS_1,\calS_1)+\text{$Q$-exact term},
\]
so we can find some $\calS_2\in \calO_{loc}(\calF)$ such that the $Q$-exact term is given by $Q(\calS_2)=(\calS_0,\calS_2)$. This implies that Equation \eqref{eq:second} holds. The higher order equations hold in a similar iterative computation. 
\end{proof}

\subsection{Classical BV-BFV theories}
Let us describe the BFV approach for the space of boundary fields.
A BFV manifold is a triple 
\[
\left(\calF^\de,\omega^\de,Q^\de\right),
\]
where $\calF^\de$ is a $\mathbf{Z}$-graded supermanifold, $\omega^\de\in \Omega^2(\calF^\de)$ an even symplectic form of ghost number $0$ and $Q^\de$ cohomological and symplectic vector field of degree $+1$ with odd Hamiltonian function $\calS^\de\in\calO_{loc}(\calF^\de)$ of ghost number $+1$, i.e. $\iota_{Q^\de}\omega^\de=\delta\calS^\de$, where $\delta$ denotes the de Rham differential on $\calF
^\de$. Moreover, we want 
\[
Q^\de(\calS^\de)=\{\calS^\de,\calS^\de\}=0.
\]
We say that a BFV manifold is \emph{exact}, if there exists a primitive $1$-form $\alpha^\de$, such that $\omega^\de=\delta\alpha^\de$. A BV-BFV manifold over an exact BFV manifold $(\calF^\de,\omega^\de=\delta\alpha^\de,Q^\de)$ is a quintuple 
\[
(\calF,\omega,\calS,Q,\pi),
\]
where $\pi\colon\calF\to \calF^\de$ is a surjective submersion such that
\begin{itemize}
    \item $\delta\pi Q=Q^\de$,
    \item $\iota_Q\omega=\delta\calS+\pi^*\alpha^\de$.
\end{itemize}
A consequence of this definition is 
\begin{equation}
    \label{eq:mCME}
    Q(\calS)=\pi^*\left(2\calS^\de-\iota_{Q^\de}\alpha^\de\right)
\end{equation}
which is called the \emph{modified classical master equation (mCME)}. The assignment $\Sigma\mapsto (\calF_\Sigma,\calS_\Sigma,Q_\Sigma,\pi_\Sigma\colon \calF_\Sigma\to \calF^\de_{\de\Sigma})$ of a manifold $\Sigma$ with boundary $\de\Sigma$ to a BV-BFV manifold is called a \emph{BV-BFV theory}.

\subsection{Examples of classical BV-BFV theories}
\subsubsection{Electrodynamics}
Let everything be as in \ref{subsec:Electrodynamics_BV} with the difference that $\Sigma$ now has non-vanishing boundary $\de\Sigma$. The boundary BFV space of fields is then given by 
\[
\calF^\de_{\de\Sigma}:=\Omega^1(\de\Sigma)\oplus \Omega^{n-2}(\de\Sigma)\oplus\Omega^0(\de\Sigma)[1]\oplus\Omega^{n-1}(\de\Sigma)[-1].
\]
A field in $\calF_{\de\Sigma}$ will be denoted by $(A,B,c,A^+)$. If we denote by $i\colon \de\Sigma\hookrightarrow \Sigma$ the inclusion of the boundary, the surjective submersion $\pi\colon \calF_\Sigma\to \calF_{\de\Sigma}$ acts on the component fields as 
\begin{align*}
    \pi_\Sigma(A)&=i^*(A):=\mathbb{A},\\
    \pi_\Sigma(B)&=i^*(B):=\mathbb{B},\\
    \pi_\Sigma(c)&=i^*(c):=\mathbbm{c},\\
    \pi_\Sigma(A^+)&=i^*(A^+):=\mathbb{A}^+,
\end{align*}
and $\mathbb{B}^+:=\pi_\Sigma(B^+)=0=\pi_\Sigma(c^+)=:\mathbbm{c}^+$. The BFV symplectic form is given by
\[
\omega_{\de\Sigma}^\de=\delta\alpha^\de_{\de\Sigma}=\int_{\de\Sigma}\big(\delta \mathbb{B}\land \delta \mathbb{A}+\delta \mathbb{A}^+\land \delta \mathbbm{c}\big),
\]
where 
\[
\alpha^\de_{\de\Sigma}=\int_{\de\Sigma}\big(\mathbb{B}\land \delta \mathbb{A}+\mathbb{A}^+\land \delta \mathbbm{c}\big).
\]
The BFV charge $Q^\de_{\de\Sigma}=\delta\pi_\Sigma Q_\Sigma$ is given by
\[
Q^\de_{\de\Sigma}=\int_{\de\Sigma}\left( \dd \mathbb{B}\land \frac{\delta}{\delta \mathbb{A}^+}+\dd \mathbbm{c}\land \frac{\delta}{\delta \mathbb{A}}\right).
\]
It is easy to check that the boundary action is given by 
\[
\calS^\de_{\de\Sigma}=\int_{\de\Sigma}\mathbbm{c}\land \dd \mathbb{B}.
\]
Then the mCME is indeed satisfied, since we have 
\begin{multline*}
    \iota_{Q_\Sigma}\omega_\Sigma=\int_\Sigma\big(\dd c\land \delta A^++\delta A\land \dd B+\\+\delta B\land (*B+\dd A)+\dd A^+\land \delta c\big)
\end{multline*}
and 
\begin{multline*}
    \delta \calS_\Sigma=\int_\Sigma\big(\delta B\land \dd A+B\land \dd\delta A+\delta B\land *B+\\+\delta A^+\land \dd c+A^+\land \dd\delta c\big).
\end{multline*}
Putting everything together and using Stokes' theorem, we get the claim.

\subsubsection{Yang--Mills theory}
Let everything be as in \ref{subsubsec:Yang-Mills_theory} with the difference that $\Sigma$ has non-vanishing boundary $\de\Sigma$. Let us denote the pullback of the forms $A,B,A^+,c$ with repsect to the inclusion $i\colon\de\Sigma\hookrightarrow \Sigma$ by $\mathbb{A},\mathbb{B},\mathbb{A}^+,\mathbbm{c}$ respectively. Note that here we have $\pi_\Sigma:=i^*\colon \calF_\Sigma\to\calF^\de_{\de\Sigma}$. The BFV space of fields is then given by 
\begin{multline*}
    \calF^\de_{\de\Sigma}=\Omega^1(\de\Sigma)\otimes \mathfrak{g}[1]\oplus\Omega^{n-2}(\de\Sigma)\otimes \mathfrak{g}[n-2]\oplus\\\oplus\Omega^0(\de\Sigma)\otimes\mathfrak{g}\oplus\Omega^{n-1}(\de\Sigma)\otimes \mathfrak{g}[n-2]. 
\end{multline*}
The BFV symplectic form is given by 
\[
\omega^\de_{\de\Sigma}=\delta\alpha^\de_{\de\Sigma}=\int_{\de\Sigma}\Tr\left(\delta \mathbb{B}\land \delta \mathbb{A}+\delta \mathbb{A}^+\land \delta \mathbbm{c}\right),
\]
where 
\[
\alpha^\de_{\de\Sigma}=\int_{\de\Sigma}\Tr\left(\mathbb{B}\land\delta \mathbb{A}+\mathbb{A}^+\land\delta \mathbbm{c}\right).
\]
The cohomological vector field is given by 
\begin{multline*}
Q^\de_{\de\Sigma}=\int_{\de\Sigma}\Tr\bigg(\dd_{\mathbb{A}}\mathbbm{c}\land\frac{\delta}{\delta \mathbb{A}}+[\mathbb{B},\mathbbm{c}]\land\frac{\delta}{\delta \mathbb{B}}+\\+(\dd_{\mathbb{A}}\mathbb{B}+[\mathbb{A}^+,\mathbbm{c}])\land \frac{\delta}{\delta \mathbb{A}^+}+\frac{1}{2}[\mathbbm{c},\mathbbm{c}]\land \frac{\delta}{\delta \mathbbm{c}}\bigg)
\end{multline*}
and the BFV action by 
\[
\calS^\de_{\de\Sigma}=\int_{\de\Sigma}\Tr\left(\mathbb{B}\land \dd_\mathbb{A}\mathbbm{c}+\frac{1}{2}\mathbb{A}^+\land[\mathbbm{c},\mathbbm{c}]\right).
\]

\subsubsection{Chern--Simons theory}
Let everything be as in \ref{subsubsec:Chern-Simons} with the difference that $\Sigma$ has non-vanishing boundary $\de\Sigma$. Let us denote the pullback of the forms $A,A^+,c$ with repsect to the inclusion $i\colon\de\Sigma\hookrightarrow \Sigma$ by $\mathbb{A},\mathbb{A}^+,\mathbbm{c}$ respectively. Note that here we have $\pi_\Sigma:=i^*\colon \calF_\Sigma\to\calF^\de_{\de\Sigma}$. We will denote the superfield on the boundary by $\mathfrak{A}:=\mathbbm{c}+\mathbb{A}+\mathbb{A}^+$. The BFV space of boundary fields is then given by 
\[
\calF^\de_{\de\Sigma}:=\Omega^\bullet(\de\Sigma)\otimes \mathfrak{g}[1]=\bigoplus_{j=0}^2\Omega^j(\de\Sigma)\otimes\mathfrak{g}[1]\ni\mathfrak{A}.
\]
The BFV symplectic form is then given by 
\begin{multline*}
\omega^\de_{\de\Sigma}=\delta\alpha^\de_{\de\Sigma}=\frac{1}{2}\int_{\de\Sigma}\Tr(\delta\mathfrak{A}\land \delta\mathfrak{A})=\\
=\int_{\de\Sigma}\Tr\bigg(\frac{1}{2}\delta \mathbb{A}\land \delta \mathbb{A}+\delta \mathbbm{c}\land \delta \mathbb{A}^+\bigg),
\end{multline*}
where 
\begin{multline*}
    \alpha^\de_{\de\Sigma}=\frac{1}{2}\int_{\de\Sigma}\Tr(\mathfrak{A}\land\delta\mathfrak{A})=\\
    =\frac{1}{2}\int_{\de\Sigma}\Tr\big(\mathbb{A}\land\delta \mathbb{A}+\mathbbm{c}\land \delta \mathbb{A}^++\mathbb{A}^+\land\delta \mathbbm{c}\big).
\end{multline*}
The cohomological vector field is given by 
\begin{multline*}
Q^\de_{\de\Sigma}=\int_{\de\Sigma}\Tr\bigg(\bigg(\frac{1}{2}\dd\mathfrak{A}+\frac{1}{2}[\mathfrak{A},\mathfrak{A}]\bigg)\land\frac{\delta}{\delta \mathfrak{A}}\bigg)=\\
=\int_{\de\Sigma}\Tr\bigg(\dd_\mathbb{A}\mathbbm{c}\land\frac{\delta}{\delta\mathbb{A}}+(F_\mathbb{A}+[\mathbbm{c},\mathbb{A}^+])\land\frac{\delta}{\delta \mathbb{A}^+}+\frac{1}{2}[\mathbbm{c},\mathbbm{c}]\land\frac{\delta}{\delta \mathbbm{c}}\bigg)
\end{multline*}
The BFV action is given by 
\begin{multline*}
    \calS^\de_{\de\Sigma}=\int_{\de\Sigma}\Tr\left(\frac{1}{2}\mathfrak{A}\land\dd\mathfrak{A}+\frac{1}{6}\mathfrak{A}\land[\mathfrak{A},\mathfrak{A}]\right)=\\
    =\int_{\de\Sigma}\Tr\left(\mathbbm{c}\land F_{\mathbb{A}}+\frac{1}{2}[\mathbbm{c},\mathbbm{c}]\land \mathbb{A}^+\right).
\end{multline*}

\subsubsection{(Abelian) $BF$ theory}
\label{subsec:BF(BFV)}
Let everything be as in \ref{subsubsec:BF} with the difference that $\Sigma$ has non-vanishing boundary $\de\Sigma$. Let $i\colon\de\Sigma\hookrightarrow \Sigma$ be the inclusion and denote the pullback of the superfields to the boundary by $\mathfrak{A}:=i^*(\mathbf{A})$ and $\mathfrak{B}:=i^*(\mathbf{B})$. Note that here we have $\pi_\Sigma:=i^*\colon \calF_\Sigma\to\calF^\de_{\de\Sigma}$.
Let us first look at the boundary theory for abelian $BF$ theory.
The BFV space of fields is given by 
\[
\calF^\de_{\de\Sigma}:=\Omega^\bullet(\de\Sigma)[1]\oplus\Omega^\bullet(\de\Sigma)[n-2]\ni(\mathfrak{A},\mathfrak{B}).
\]
The BFV symplectic form is given by 
\begin{equation*}
\label{eq:BFV_symplectic_abelian_BF}
\omega^\de_{\de\Sigma}=\delta\alpha^\de_{\de\Sigma}=\int_{\de\Sigma}\delta\mathfrak{A}\land\delta\mathfrak{B},
\end{equation*}
where 
\[
\alpha^\de_{\de\Sigma}=\int_{\de\Sigma}\mathfrak{A}\land \delta\mathfrak{B}.
\]
The cohomological vector field is given by 
\[
Q^\de_{\de\Sigma}=\int_{\de\Sigma}\bigg(\dd\mathfrak{A}\land\frac{\delta}{\delta\mathfrak{A}}+\dd\mathfrak{B}\land \frac{\delta}{\delta\mathfrak{B}}\bigg).
\]
The BFV action is given by 
\[
\calS^\de_{\de\Sigma}=\int_{\de\Sigma}\mathfrak{B}\land \dd\mathfrak{A}.
\]
For the non-abelian case we have the BFV space of fields
\[
\calF^\de_{\de\Sigma}:=\Omega^\bullet(\de\Sigma)\otimes\mathfrak{g}[1]\oplus\Omega^\bullet(\de\Sigma)\otimes \mathfrak{g}[n-2]\ni(\mathfrak{A},\mathfrak{B}).
\]
The BFV symplectic form is given by 
\[
\omega^\de_{\de\Sigma}=\delta\alpha^\de_{\de\Sigma}=\int_{\de\Sigma}\Tr\big(\delta\mathfrak{A}\land\delta\mathfrak{B}\big),
\]
where 
\[
\alpha^\de_{\de\Sigma}=\int_{\de\Sigma}\Tr\big(\mathfrak{A}\land \delta\mathfrak{B}\big).
\]
The cohomological vector field is given by 
\[
Q^\de_{\de\Sigma}=\int_{\de\Sigma}\Tr\bigg(\bigg(\dd\mathfrak{A}+\frac{1}{2}[\mathfrak{A},\mathfrak{A}]\bigg)\land\frac{\delta}{\delta\mathfrak{A}}+\dd_\mathfrak{A}\mathfrak{B}\land \frac{\delta}{\delta\mathfrak{B}}\bigg).
\]
The BFV action is given by 
\[
\calS^\de_{\de\Sigma}=\int_{\de\Sigma}\Tr\bigg(\mathfrak{B}\land\bigg(\dd\mathfrak{A}+\frac{1}{2}[\mathfrak{A},\mathfrak{A}]\bigg)\bigg).
\]

\subsection{Obstruction space on the boundary}
Similarly as for BV theories one can ask about the quantization obstruction for a BV-BFV theory, i.e. for a codimension 1 theory. In fact, we get the following theorem.
\begin{thm}
\label{thm:obstruction_BFV}
Let $(\calF,\omega,\calS,Q,\pi\colon \calF\to\calF^\de)$ be a BV-BFV manifold over an exact BFV manifold $(\calF^\de,\omega^\de=\delta\alpha^\de,Q^\de)$. The obstruction space for quantization on the underlying boundary BFV manifold $\calF^\de$ is given by
\begin{equation}
\label{eq:complex2}
\mathrm{H}^2_{Q^\de}(\calO_{loc}(\calF^\de)),
\end{equation}
where 
\[
Q^\de=\{\calS^\de,\enspace\}
\]
with $\{\enspace,\enspace\}$ the Poisson bracket induced by the symplectic form $\omega^\de$.
\end{thm}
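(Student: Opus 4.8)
The plan is to transport the argument of Theorem~\ref{thm:obstruction_BV} to the boundary, with the BV Laplacian $\Delta$ and the odd bracket $(\,,\,)$ replaced by the data of a deformation quantization of the \emph{even} symplectic manifold $(\calF^\de,\omega^\de)$ of ghost number $0$. Concretely, I would deform the boundary action to a formal series $\calS^\de_\hbar=\sum_{k\ge 0}\hbar^k\calS^\de_k\in\calO_{loc}(\calF^\de)[\![\hbar]\!]$ with $\calS^\de_0=\calS^\de$ and each $\calS^\de_k$ necessarily of ghost number $+1$ (since $\gh\calS^\de=+1$), and quantize: a star product $\star$ on $\calO_{loc}(\calF^\de)[\![\hbar]\!]$ deforming the pointwise product in the direction of $\{\,,\,\}$, equivalently the induced operator $\widehat{\calS^\de_\hbar}$ on the boundary state space. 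The \emph{boundary quantum master equation} is then the nilpotency of the quantized boundary charge,
\[
\calS^\de_\hbar\star\calS^\de_\hbar=0,\qquad\text{equivalently}\qquad\widehat{\calS^\de_\hbar}^2=0,
\]
which is exactly the condition that $Q^\de$ lifts to a genuine quantum differential and hence that the boundary cohomology theory is well defined. As $\calS^\de$ is odd, its pointwise square vanishes and the leading term of this equation is the classical master equation $\{\calS^\de_0,\calS^\de_0\}=0$, which holds by the BFV axioms.

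Next I would expand the boundary quantum master equation order by order in $\hbar$, exactly as in \eqref{eq:CME}--\eqref{eq:second}, obtaining after the classical master equation a first nontrivial equation of the form
\[
Q^\de(\calS^\de_1)=\calA,
\]
where $Q^\de=\{\calS^\de_0,\,\cdot\,\}$ and $\calA\in\calO_{loc}(\calF^\de)$ is the ``boundary anomaly'', built from $\calS^\de_0$ and the bidifferential operators of $\star$ — the exact counterpart of $\Delta\calS_0$ in the bulk. The essential bookkeeping point is a degree shift by one relative to the bulk: the boundary bracket $\{\,,\,\}$ is ghost-number preserving and $\gh\calS^\de=+1$, so the classical master equation $\{\calS^\de,\calS^\de\}=0$ sits in ghost number $2$, whence $\calA$ has ghost number $+2$ and $Q^\de$ raises ghost number $+1$ to $+2$. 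Applying $Q^\de$ to $\calA$ and using the graded Jacobi identity for $\{\,,\,\}$ together with $\{\calS^\de_0,\calS^\de_0\}=0$ shows that $\calA$ is $Q^\de$-closed (this is the boundary analogue of the identity $Q(\Delta\calS_0)=0$ from the bulk). Therefore the $\hbar^1$-equation is solvable for $\calS^\de_1$ precisely when $[\calA]=0$ in $\mathrm{H}^2_{Q^\de}(\calO_{loc}(\calF^\de))$, which is the asserted obstruction space.

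It then remains to run the induction as at the end of the proof of Theorem~\ref{thm:obstruction_BV}: assuming the equations up to order $\hbar^{n-1}$ have been solved, graded Jacobi and graded commutativity of $\{\,,\,\}$ together with the lower-order identities force the $\hbar^n$-equation into the shape $Q^\de(\calS^\de_n)=\calA_n$ with $\calA_n$ again a $Q^\de$-cocycle of ghost number $+2$, so the whole tower of obstructions is governed by $\mathrm{H}^2_{Q^\de}(\calO_{loc}(\calF^\de))$. Conceptually, this is the standard deformation-theoretic fact that $\calS^\de$ is a Maurer--Cartan element of the dg Lie algebra $(\calO_{loc}(\calF^\de)[1],\{\,,\,\})$ and that the obstructions to lifting it, order by order in $\hbar$, to the quantized ($\mathbb{BD}_1$-)structure lie in the degree-$2$ cohomology of the twisted differential $Q^\de$. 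I expect the main obstacle to be making the boundary quantum master equation and the term $\calA$ precise: pinning down which bidifferential operator of the chosen quantization plays the role of $\Delta$ (and at which order in $\hbar$ it first contributes), and checking that the resulting class in $\mathrm{H}^2_{Q^\de}$ is independent of the quantization data — it changes only by $Q^\de$-exact terms. As in the bulk, in the infinite-dimensional setting these operators have to be understood after a suitable renormalization.
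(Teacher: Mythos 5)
Your proposal is correct and follows essentially the same route as the paper: deform $\calS^\de$ to $\calS^\de_\hbar$, impose $\calS^\de_\hbar\star\calS^\de_\hbar=0$ as the boundary quantum master equation, and read off the order-by-order obstructions as $Q^\de$-cocycles of ghost number $+2$, the first being the class of $B_2(\calS^\de_0,\calS^\de_0)$. The only differences are cosmetic: the paper first derives the condition $\big(\widehat{\calS^\de}\big)^2=0$ from the modified QME via the splitting $\calF=\calB\times\calY$ before expanding the star product, and it obtains the $Q^\de$-closedness of the anomaly from associativity of $\star$ (of which the graded Jacobi identity you invoke is the leading consequence) rather than from Jacobi alone.
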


\begin{proof}
Consider a deformation of the BFV action $\calS^\de$, denoted by $\calS^\de_\hbar$, depending on $\hbar$ and consider its expansion as a formal power series
\begin{multline}
\label{eq:expansion}
\calS^\de_\hbar:=\calS^\de_0+\hbar \calS^\de_1+\hbar^2\calS^\de_2+O(\hbar^3)\\=\sum_{k\geq 0}\hbar^k\calS^\de_k\in\calO_{loc}(\calF^\de)[\![\hbar]\!],
\end{multline}
where $\calS^\de_k\in\calO_{loc}(\calF^\de)$ for all $k\geq 0$ such that $\calS^\de_0:=\calS^\de$. Note that $\gh\calS^\de_k=+1$ since $\gh \calS^\de=+1$ and the corresponding symplectic form $\omega^\de$ is even of ghost number $0$. 
In the BV-BFV construction one assumes a symplectic splitting of the BV space of fields
\begin{equation}
\label{eq:splitting1}
\calF=\calB\times \calY
\end{equation}
where the BV symplectic form $\omega$ is constant on $\calB$. One should think of $\calB$ as the boundary part and $\calY$ as the bulk part of the fields. In fact, the space $\calB$ is constructed as the leaf space for a chosen polarization on the space of boundary fields $\calF^\de$ (i.e. a Lagrangian subbundle of $T\calF^\de$ closed under the Lie bracket) and $\calY$ is just a symplectic complement.
Using this splitting, we can write the mCME
    \begin{align}
    \label{eq:Y_part}
    \delta_\calY\calS&=\iota_{Q_\calY}\omega,\\
    \label{eq:B-part}
    \delta_\calB\calS&=-\alpha^\de,
    \end{align}
    where $Q_\calY$ denotes the part of the cohomological vector field $Q$ on $\calY$, $\delta_\calY$ and $\delta_\calB$ denote the corresponding parts of the de Rham differential $\delta$ on the BV space of fields $\calF$ according to the splitting \eqref{eq:splitting1}.
    Note that we have dropped the pullback $\pi^*$. These two equations together with \eqref{eq:mCME} imply
    \begin{equation}
    \label{eq:Y-part2}
    \frac{1}{2}(\calS,\calS)_\calY=\frac{1}{2}\iota_{Q_\calY}\iota_{Q_\calY}\omega=\calS^\de.
    \end{equation}
    Choose Darboux coordinates $(b^i,p_i)$ on $\calF^\de$ such that $b^i$ denotes the coordinates on the base $\calB$ and $p_i$ on the leaves.
    In the case of an infinite-dimensional Banach manifold, locally one has Darboux's theorem by using Moser's trick, whenever the tangent spaces are split, i.e. there are two Lagrangian subspaces $\calL_1,\calL_2\subset T_{\beta}\calF^\de$ such that $T_{\beta}\calF^\de=\calL_1\oplus \calL_2$ for some $\beta\in \calF^\de$. This is in general not true for Fr\'echet manifolds, even if the tangent spaces are split. Note that such a splitting is guaranteed if each fiber is a Hilbert space (see \cite{CattaneoContreras2018} for similar discussions).
    However, for the quantization we want to perturb around each critical point, thus we only have to use the linear structure. Additionally, we have to assume that the tangent spaces are split and that there is Darboux's theorem if we work in the Fr\'echet setting (see \cite{Kaveh2019} for discussions about Darboux's theorem on infinite-dimensional Fr\'echet manifolds).  
    This allows us to write 
    \[
    \alpha^\de=-\sum_ip_i\delta b^i.
    \]
    Using Equation \eqref{eq:B-part}, we get 
    \[
    \frac{\overrightarrow{\delta}}{\delta b^i}\calS=p_i,\quad \forall i.
    \]
    Denote by $\Delta_\calY$ the BV Laplacian restricted to $\calY$. We will assume that $\Delta_\calY\calS=0$. For the closed case this means that we assume that $\calS$ solves both, the CME and the QME. For the case with boundary, the BV Laplacian anyway only makes sense on $\calY$, so $\Delta=\Delta_\calY$. Next, we can obtain 
    \[
    \Delta_\calY\exp\left(\I\calS/\hbar\right)=\left(\frac{\I}{\hbar}\right)^2\frac{1}{2}(\calS,\calS)_\calY\exp\left(\I\calS/\hbar\right)
    \]
    and by Equation \eqref{eq:Y-part2}, we get 
    \begin{equation}
        \label{eq:mQME_Y-part}
        -\hbar^2\Delta_\calY\exp\left(\I\calS/\hbar\right)=\calS^\de\exp\left(\I\calS/\hbar\right).
    \end{equation}
    Now consider the standard quantization $\widehat{p}_i:=-\I\hbar\frac{\overrightarrow{\delta}}{\delta b^i}$. If $\widehat{p}_i$ acts on a function $\calS$ on $\calB$ parametrized by $\calY$, we get 
    \[
    \widehat{p}_i\calS=-\I\hbar p_i, \quad p_i\in \calY.
    \]
    Finally, considering the ordered standard quantization of $\calS^\de$ given by 
    \[
    \widehat{\calS^\de}:=\calS^\de\left(b^i,-\I\hbar\frac{\overrightarrow{\delta}}{\delta b^i}\right),
    \]
    where all the derivatives are placed to the right, and using Equation \eqref{eq:mQME_Y-part}, we get the \emph{modified quantum master equation (mQME)} \cite{CMR2} 
    \begin{equation}
    \label{eq:mQME1}
    \left(\hbar^2\Delta_\calY+\widehat{\calS^\de}\right)\exp\left(\I\calS/\hbar\right)=0.
    \end{equation}
In order to get a well-defined cohomology theory, we require that $$\left(\hbar^2\Delta+\widehat{\calS^\de}\right)^2=0.$$ 
Since $\Delta^2=0$ and obviously the commutator $\left[\Delta,\widehat{\calS^\de}\right]$ vanishes, we have to assume that $\left(\widehat{\calS^\de}\right)^2=0$. This clearly follows if
\begin{equation}
\label{eq:def_quant}
\calS^\de_\hbar\star\calS^\de_\hbar=0,
\end{equation}
where 
$$\star\colon \calO(\calF^\de)[\![\hbar]\!]\times \calO(\calF^\de)[\![\hbar]\!]\to \calO(\calF^\de)[\![\hbar]\!]$$
denotes the star product (deformation quantization) induced by the BFV form $\omega^\de$ and the standard ordering as mentioned above. Actually, the construction with the star product does not require the notion of a BV-BFV manifold and thus can be also considered independently for the BFV case. Moreover, the deformed boundary action $\calS^\de_\hbar$ satisfying \eqref{eq:def_quant} might spoil the mQME \eqref{eq:mQME1}.
Note that we can endow the deformed algebra $\calO_{loc}(\calF^\de)[\![\hbar]\!]$ with a dg structure by considering the differential given by 
\begin{equation}
    \label{eq:star_diff}
    Q^\de_\hbar:=\calS^\de_\hbar\star-.
\end{equation}
Then we have
\begin{multline}
Q^\de_\hbar(\calS^\de_\hbar)=\calS^\de_\hbar\star \calS^\de_\hbar=\calS^\de_\hbar\calS^\de_\hbar+\sum_{k\geq 1}\hbar^kB_k(\calS^\de_\hbar,\calS^\de_\hbar)\\
=\calS^\de_\hbar\calS^\de_\hbar+\hbar\{\calS^\de_\hbar,\calS^\de_\hbar\}+\hbar^2B_2(\calS^\de_\hbar,\calS^\de_\hbar)+O(\hbar^3),
\end{multline}
where $B_k$ denotes some bidifferential operator for all $k\geq 1$ with $B_1:=\{\enspace,\enspace\}$. 
Moreover, note that we have
\begin{multline}
\label{eq:Pbracket}
\{\calS^\de_\hbar,\calS^\de_\hbar\}=\{\calS^\de_0,\calS^\de_0\}+\hbar\{\calS^\de_0,\calS^\de_1\}+\hbar\{\calS^\de_1,\calS^\de_0\}+\\
+\hbar^2\{\calS^\de_1,\calS^\de_1\}+\hbar^2\{\calS^\de_0,\calS^\de_2\}+O(\hbar^3)
\end{multline}
Using \eqref{eq:expansion} and \eqref{eq:Pbracket}, we get
\begin{multline}
\calS_\hbar^\de\star \calS_\hbar^\de=(\calS^\de_0+\hbar\calS^\de_1+\hbar^2\calS^\de_2+O(\hbar^3))\times\\
\times(\calS^\de_0+\hbar\calS^\de_1+\hbar^2\calS^\de_2+O(\hbar^3))+\\
+\hbar(\{\calS^\de_0,\calS^\de_0\}+\hbar\{\calS^\de_0,\calS^\de_1\}+\hbar\{\calS^\de_1,\calS^\de_0\}+\\
+\hbar^2\{\calS^\de_1,\calS^\de_1\}+\hbar^2\{\calS^\de_0,\calS^\de_2\}+O(\hbar^3))+\hbar^2B_2(\calS^\de_0,\calS^\de_0)+O(\hbar^3)\\
=\calS^\de_0\calS^\de_0+\hbar(\calS^\de_1\calS^\de_0+\calS^\de_0\calS^\de_1+\{\calS^\de_0,\calS^\de_0\})+\\
+\hbar^2(\calS^\de_0\calS^\de_2+\calS^\de_2\calS^\de_0+\calS^\de_1\calS^\de_1+\{\calS^\de_0,\calS^\de_1\}+B_2(\calS^\de_0,\calS^\de_0))+O(\hbar^3)\\
\label{eq:defquant}
=\hbar\{\calS_0^\de,\calS_0^\de\}+\hbar^2(\{\calS_0^\de,\calS_1^\de\}+B_2(\calS^\de_0,\calS^\de_0))+O(\hbar^3),
\end{multline}

were we have used the graded commutativity relation
$$\{f,g\}=-(-1)^{(\gh f+1)(\gh g+1)}\{g,f\},$$
the fact that $\{\enspace,\enspace\}$ is even of ghost number $0$ and that each $\calS^\de_k$ is odd of ghost number $+1$ for all $k\geq 0$. Note that by the CME for $\calS_0^\de$ the first term in \eqref{eq:defquant} vanishes. Moreover, by the associativity of the star product we get
$$\{\calS_0^\de,B_2(\calS^\de_0,\calS^\de_0)\}=Q^\de(B_2(\calS^\de_0,\calS^\de_0))=0,$$
and thus $B_2(\calS_0^\de,\calS_0^\de)$ is closed under the coboundary operator $Q^\de=\{\calS^\de,\enspace\}$. If we assume that $B_2(\calS_0^\de,\calS_0^\de)$ is also $Q^\de$-exact, there exists some $\calS_1^\de\in\calO_{loc}(\calF^\de)$, such that 
$$B_2(\calS_0^\de,\calS_0^\de)=-\{\calS_0^\de,\calS_1^\de\}=-Q^\de(\calS^\de_1).$$
Thus the coefficients in degree $+2$ vanish and one can check that by the construction of the star product all the higher coefficients will also vanish using a similar iterative procedure as we have seen before.
\end{proof}

More general, in the quantum BV-BFV construction \cite{CMR2} one can construct a geometric quantization \cite{Kir85,Wood97,BatesWeinstein2012} on the space of boundary fields $\calF^\de$ using the symplectic form $\omega^\de$ and the chosen polarization. This will give a vector space $\calH$ (actually a chain complex $\big(\calH,\widehat{\calS^\de}\big)$ associated to the source boundary. In fact, we can construct $\calH$ as the space of half-densities $\mathrm{Dens}^\frac{1}{2}(\calB)$ on $\calB$. We call $\widehat{\calH}:=\calH\hat{\otimes} \mathrm{Dens}^{\frac{1}{2}}(\calV)$ the \emph{space of states}. We have denoted by $\hat{\otimes}$ a certain completion of the tensor product in order to deal with the infinite-dimensional case.
Moreover, we have denoted by $\mathrm{Dens}^\frac{1}{2}(\calV)$ the space of half-densities on $\calV$. In order to deal with high energy terms for a functional integral quantization, we assume another splitting 
\begin{equation}
\label{eq:splitting2}
\calY=\calV\times\calY',
\end{equation}
where $\calV$ denotes the space of classical solutions (critical points) of the quadratic part of the action modulo gauge symmtery and $\calY'$ is a complement. In fact, we assume that the BV Laplacian and the BV symplectic form split accordingly as 

\begin{align}
    \Delta&=\Delta_\calV+\Delta_{\calY'},\\
    \omega&=\omega_\calV+\omega_{\calY'}.
\end{align}
Such a splitting is guaranteed for many important theories, such as (perturbations of) abelian $BF$ theories, by methods of \emph{Hodge decomposition} \cite{CMR2}.
In that special case, $\calB$ is in fact given by the fields restricted to the boundary.  The elements of $\calV$ are given by the \emph{zero modes} of the bulk fields and the elements of $\calY$ are given by the \emph{high energy} parts of the bulk fields. Choosing a gauge-fixing Lagrangian submanifold $\calL\subset \calY'$, a boundary state is given by 
\begin{equation}
    \widehat{\Psi}:=\int_{\calL\subset\calY'} \exp\left(\I\calS/\hbar\right)\in \widehat{\calH},
\end{equation}
where the functional integral is defined by its perturbative expansion.
One can then extend \eqref{eq:mQME1} to elements of $\widehat{\calH}$
\begin{equation}
    \label{eq:mQME2}
    \left(\hbar^2\Delta_\calV+\widehat{\calS^\de}\right)\widehat{\Psi}=0.
\end{equation}
Note that a state $\widehat{\Psi}$ depends on leaves in $\calB$ and zero modes in $\calV$. One can show that the space of zero modes is given by a finite-dimensional BV manifold $(\calV,\Delta_\calV,\omega_\calV)$ if we consider $BF$-like theories \cite{CMR2}. 
Note that in this case it makes sense to define $\Delta_\calV$.
As it was argued in \cite{CMR2}, there is a way of integrating out the zero modes. Using cutting and gluing techniques on the source, motivated by the constructions of \cite{Atiyah1988,Segal1988}, we will obtain a number which corresponds to the value of the partition function for a closed manifold. Moreover, in \cite{CMR2} it was shown that there is always a quantization $\widehat{\calS^\de}$ of $\calS^\de$ that squares to zero and satisfies \eqref{eq:mQME2}. It is fully described by integrals over the boundary of suitable configuration spaces determined by the underlying Feynman graphs.

\subsection{Examples of quantum BV-BFV theories}
One can extract the axiomatics for a quantum BV-BFV theory out of the computations we have seen before. A \emph{quantum BV-BFV theory} consists of the following data:
\begin{enumerate}[$(i)$]
    \item A graded vector space $\calH_{\widetilde{\Sigma}}$ associated to each $(n-1)$-dimensional manifold $\widetilde{\Sigma}$ with a choice of polarization on $\calF^\de_{\widetilde{\Sigma}}$. It is constructed by geometric quantization of the symplectic manifold $(\calF^\de_{\widetilde{\Sigma}},\omega^\de_{\widetilde{\Sigma}})$. The space $\calH_{\widetilde{\Sigma}}$ is called \emph{space of states}.
    \item A coboundary operator $\Omega_{\widetilde{\Sigma}}$ on $\calH_{\widetilde{\Sigma}}$ which is a quantization of the BFV action $\calS^\de_{\widetilde{\Sigma}}$. The operator $\Omega_{\widetilde{\Sigma}}$ is called \emph{quantum BFV operator}. 
    \item A finite-dimensional manifold $\calV_\Sigma$ associated to each $n$-dimensional manifold $\Sigma$, which is endwoed with a degree $-1$ symplectic form $\omega_{\calV_\Sigma}$ and a polarization on $\calF^\de_{\de\Sigma}$. It is called the \emph{space of residual fields}. Moreover, the space 
    \[
    \widehat{\calH}_{\Sigma}:=\calH_{\de\Sigma}\hat{\otimes}\mathrm{Dens}^\frac{1}{2}(\calV_\Sigma)
    \]
    is endowed with two commuting coboundary operators $\widehat{\Omega}_{\partial\Sigma}:=\Omega_{\de\Sigma}\otimes \id$ and $\widehat{\Delta}_\Sigma:=\id\otimes \Delta_{\calV_\Sigma}$, where $\Delta_{\calV_\Sigma}$ denotes the canonical BV Laplacian on half-densities on residual fields $\calV_\Sigma$.
    \item A \emph{state} $\widehat{\Psi}_\Sigma\in\widehat{\calH}_{\Sigma}$ which satisfies the \emph{modified QME}
    \[
    \big(\hbar^2\widehat{\Delta}_\Sigma+\widehat{\Omega}_{\de\Sigma}\big)\widehat{\Psi}_\Sigma=0.
    \]
\end{enumerate}

\subsubsection{Example: (Perturbations of) abelian $BF$ theory}
\label{subsubsec:quant_abelian_BF}
Consider the abelian version of the classical BV theory as in \ref{subsubsec:BF} and its classical BV-BFV extension. Moreover, we want to consider a source manifold $\Sigma$ whose boundary $\de\Sigma$ splits into the disjoint union of two boundary components $\de_1\Sigma$ and $\de_2\Sigma$ representing the \emph{incoming} and \emph{outgoing} boundary components respectively. On $\de_1\Sigma$ we choose the $\frac{\delta}{\delta\mathfrak{B}}$-polarization such that the quotient (leaf space) may be identified with $\calB_1:=\Omega^\bullet(\de_1\Sigma)[1]\ni\mathfrak{A}$ and on $\de_2\Sigma$ we choose the $\frac{\delta}{\delta\mathfrak{A}}$-polarization such that the quotient may be identified with $\calB_2:=\Omega^\bullet(\de_2\Sigma)[n-2]\ni\mathfrak{B}$. The whole leaf space is given by the product $\calB_{\de\Sigma}=\calB_{1}\times \calB_{2}$. The space of residual fields is given by the finite-dimensional BV manifold 
\[
\calV_\Sigma:=\mathrm{H}_{\mathrm{D}1}^\bullet(\Sigma)[1]\oplus\mathrm{H}_{\mathrm{D}2}^\bullet(\Sigma)[n-2],
\]
where $\mathrm{H}^\bullet_{\mathrm{D}j}(\Sigma)$ denotes the de Rham cohomology of the space $\Omega^\bullet_{\mathrm{D}j}(\Sigma):=\{\gamma\in\Omega^\bullet(\Sigma)\mid i_j^*\gamma=0\}$, where $i_j$ denotes the inclusion map $\de_j\Sigma\hookrightarrow \Sigma$. Here D stands for \emph{Dirichlet}. Note that by Poincar\'e duality we get $\calV_\Sigma=T^*[-1](\mathrm{H}^\bullet_{\mathrm{D}1}(\Sigma)[1])=T^*[-1](\mathrm{H}^\bullet_{\mathrm{D}2}(\Sigma)[n-2])$. The BV Laplacian $\Delta_{\calV_\Sigma}$ can be defined by using a basis $([\chi_i])_i$ of $\mathrm{H}^\bullet_{\mathrm{D}1}(\Sigma)$ and its dual basis $([\chi^i])_i$ of $\mathrm{H}^\bullet_{\mathrm{D}2}(\Sigma)$ with chosen representatives $\chi_i\in\Omega^\bullet_{\mathrm{D}1}(\Sigma) $ and $\chi^i\in\Omega^\bullet_{\mathrm{D}2}(\Sigma)$. Note that we have 
\[
\int_\Sigma\chi^i\land\chi_j=\delta^i_j,
\]
and we can write the residual fields in $\calV_\Sigma$ by
\[
\mathsf{a}=\sum_iz^i\chi_i,\quad \mathsf{b}=\sum_jz^+_j\chi^j,
\]
with $(z^i,z^+_j)$ being canonical coordinates on $\calV_\Sigma$. The BV symplectic form on $\calV_\Sigma$ is then given by 
\[
\omega_{\calV_\Sigma}=\sum_i(-1)^{1+(n-1)\gh z^i}\delta z_i^+\land \delta z^i.
\]
The BV Laplacian on $\calV_\Sigma$ is then given by 
\[
\Delta_{\calV_\Sigma}=\sum_i(-1)^{1+(n-1)\gh z^i}\frac{\de}{\de z^i}\frac{\de}{\de z^+_i}.
\]
The quantum BFV operator $\widehat{\Omega}_{\de\Sigma}$, acting on $\calB_{\de\Sigma}\times\calV_\Sigma$, is given by the ordered standard quantization of $\calS^\de_{\de\Sigma}$ relative to the chosen polarization:
\[
\widehat{\Omega}_{\de\Sigma}=\I\hbar(-1)^n\left(\int_{\de_2\Sigma}\dd\mathfrak{B}\land\frac{\delta}{\delta\mathfrak{B}}+\int_{\de_1\Sigma}\dd\mathfrak{A}\land\frac{\delta}{\delta\mathfrak{A}}\right).
\]
Using the \emph{effective action} given by 
\begin{multline*}
\calS^{\mathrm{eff}}_{\Sigma}:=(-1)^{n-1}\left(\int_{\de_2\Sigma}\mathfrak{B}\land\mathsf{a}-\int_{\de_1\Sigma}\mathsf{b}\land \mathfrak{A}\right)-\\-(-1)^{2n}\int_{\de_2\Sigma\times \de_1\Sigma}\pi_1^*\mathfrak{B}\eta\pi_2^*\mathfrak{A},
\end{multline*}
where $\eta\in \Omega^{n-1}(C_2(\Sigma))$ is a chosen \emph{propagator} on the compactified configuration space 
\[
C_2(\Sigma):=\overline{\{(x_1,x_2)\in \Sigma^2\mid x_1\not=x_2\}}^{\mathrm{FMAS}}
\]
(here FMAS stands for the \emph{Fulton--MacPherson}/\emph{Axelrod--Singer} compactification of configuration spaces \cite{FulMacPh,AS2}), we get the state
\[
\widehat{\Psi}_\Sigma=T_\Sigma\exp\left(\I\calS^\mathrm{eff}_\Sigma/\hbar\right).
\]
Here $T_\Sigma\in\mathbf{C}$ denotes a coefficient expressed in terms of the \emph{Reidemeister torsion}. Indeed, one can then immediatly check that the mQME is satisfied where $\widehat{\Delta}_\Sigma:=\Delta_{\calV_\Sigma}$ acts on the fibers of $\calB_{\de\Sigma}\times \calV_\Sigma$. 
We can construct the state space to be 
\[
\widehat{\calH}_{\Sigma}=\left(\prod_{j_1,j_2\geq 0}\calH^{j_2,n-2}_{\de_2\Sigma}\hat{\otimes}\calH^{j_1,1}_{\de_1\Sigma}\right)\hat{\otimes}\mathrm{Dens}^{\frac{1}{2}}(\calV_\Sigma), 
\]
where $\calH^{j,\ell}_{\de\Sigma}$ is the vector space of $j$-linear functionals on $\Omega^\bullet(\de\Sigma)[\ell]$ of the form 
\[
\Omega^\bullet(\de\Sigma)[\ell]\ni\mathfrak{D}\mapsto \int_{(\de\Sigma)^j}\gamma\pi_1^*\mathfrak{D}\land\dotsm \land \pi_j^*\mathfrak{D}
\]
times some prefactor (given in terms of the Reidemeister torsion), where $\gamma$ is some distributional form on $(\de\Sigma)^j$ and $\pi_i$ denotes the projection to the $i$-th component. Considering perturbations, we asymptotically get that states are of the form 
\begin{multline*}
    \widehat{\Psi}_\Sigma\sim T_\Sigma\exp\left(\I\calS^\mathrm{eff}_\Sigma/\hbar\right)\times\\
    \times\sum_{k\geq 0}\hbar^k\sum_{j_1,j_2\geq 0}\int_{(\de_1\Sigma)^{j_1}\times(\de_2\Sigma)^{j_2}}R^k_{j_1j_2}(\mathsf{a},\mathsf{b})\pi_{1,1}^*\mathfrak{A}\land\dotsm\\\dotsm\land\pi_{1,j_1}^*\mathfrak{A}\land\pi_{2,1}^*\mathfrak{B}\land\dotsm\land\pi_{2,j_2}^*\mathfrak{B},
\end{multline*}
where $\pi_{i,j}$ denotes the $j$-th projection of $(\de_i\Sigma)^{j_i}$ and $R^k_{j_1j_2}$ denotes distributional forms on $(\de_1\Sigma)^{j_1}\times(\de_2\Sigma)^{j_2}$ with values in $\mathrm{Dens}^\frac{1}{2}(\calV_\Sigma)$. Note that here $\calS^\mathrm{eff}_\Sigma$ is replaced by the corresponding zero-loop effective action. We refer the reader to \cite{CMR2} for more examples and a detailed discussion of perturbative quantizations on manifolds with boundary.

An important version of a perturbation of abelian $BF$ theory is given by \emph{split Chern--Simons theory} \cite{CMW}. Consider Chern--Simons theory as in \ref{subsubsec:Chern-Simons} for a Lie algebra $\mathfrak{g}$ endowed with an invariant pairing $\langle\enspace,\enspace\rangle$. Moreover, consider a suitable 3-manifold $\Sigma$.
As we have seen, for $\mathbf{A}\in\Omega^\bullet(\Sigma)\otimes\mathfrak{g}[1]$, the BV action is given by 
\[
\calS_\Sigma=\int_\Sigma \left(\frac{1}{2}\langle\mathbf{A},\dd\mathbf{A}\rangle+\frac{1}{6}\langle\mathbf{A},[\mathbf{A},\mathbf{A}]\rangle\right).
\]
Assume that the Lie algebra splits as $\mathfrak{g}=V\oplus W$ into maximally isotropic subspaces with respect to $\langle\enspace,\enspace\rangle$, i.e. the pairing restricts to zero on $V$ and $W$ and $\dim V=\dim W=\frac{1}{2}\dim\mathfrak{g}$. Then one can identify $W\cong V^*$ by using the pairing and consider a decomposition $\mathbf{A}=\mathbf{V}+\mathbf{W}$ with $\mathbf{V}\in\Omega^\bullet(\Sigma)\otimes V[1]$ and $\mathbf{W}\in\Omega^\bullet(\Sigma)\otimes W[1]$. Then we can decompose the action $\calS_\Sigma=\calS_\Sigma^\mathrm{kin}+\calS^\mathrm{int}_\Sigma$ into a \emph{kinetic} and \emph{interaction} part: 
\begin{align*}
\calS^{\mathrm{kin}}_\Sigma&=\frac{1}{2}\int_\Sigma\langle \mathbf{A},\dd\mathbf{A}\rangle=\int_\Sigma\langle \mathbf{W},\dd\mathbf{V}\rangle,\\
\calS^\mathrm{int}_\Sigma&=\frac{1}{6}\int_\Sigma\langle\mathbf{A},[\mathbf{A},\mathbf{A}]\rangle\\&=\frac{1}{6}\int_\Sigma\langle\mathbf{V}+\mathbf{W},[\mathbf{V}+\mathbf{W},\mathbf{V}+\mathbf{W}]\rangle.
\end{align*}
An important assumption on the theory is that $(\mathfrak{g},V,W)$ is in fact a \emph{Manin triple}, i.e. $V$ and $W$ are actually Lie subalgebras of $\mathfrak{g}$.
The quantum picture is then similar to the one of abelian $BF$ theory (see \cite{CMW} for a detailed construction). More general perturbations of abelian $BF$ theory is given by \emph{AKSZ theories} \cite{AKSZ} as mentioned in Remark \ref{rem:AKSZ} (see also \cite{CMR1,CMR2,CMW4} for a detailed treatment of such theories in the BV-BFV formalism).

\section{Higher codimension}

\subsection{Higher codimension gauge theories: BV-BF$^k$V theories}
Since the BV-BFV construction is a codimension 1 formulation, we have an action of a dg algebra of observables, coming from the deformation quantization construction, to a chain complex (or vector space) associated to the boundary via geometric quantization with respect to the symplectic manifold $(\calF^\de,\omega^\de)$. This corresponds to the action of the operator $\widehat{\calS^\de}\in \End(\calH)$ on $\widehat{\Psi}\in\widehat{\calH}$. 
Classical BV-BFV theories can be extended to higher codimension manifolds \cite{CMR1}. One can define an exact BF$^k$V manifold to be a triple $(\calF^{\de^k},\omega^{\de^k}=\delta\alpha^{\de^k},Q^{\de^k})$ where $\calF^{\de^k}$ is a $\mathbf{Z}$-graded supermanifold, $\omega^{\de^k}\in \Omega^2(\calF^{\de^k})$ is an exact symplectic form of ghost number $k-1$ with primitive 1-form $\alpha^{\de^k}$, and $Q^{\de^k}\in\mathfrak{X}(\calF^{\de^k})$ is a cohomological, symplectic vector field with Hamiltonian function $\calS^{\de^k}$ of ghost number $k$.
A BV-BF$^k$V manifold over an exact BF$^k$V manifold $(\calF^{\de^k},\omega^{\de^k}=\delta\alpha^{\de^k},Q^{\de^k})$ is a quintuple 
\[
(\calF^{\de^{k-1}},\omega^{\de^{k-1}},\calS^{\de^{k-1}},Q^{\de^{k-1}},\pi\colon \calF^{\de^{k-1}}\to \calF^{\de^{k}})
\]
such that $\pi$ is a surjective submersion and
\begin{itemize}
    \item $\delta\pi Q^{\de^{k-1}}=Q^{\de^k}$,
    \item $\iota_{Q^{\de^{k-1}}}\omega^{\de^{k-1}}=\delta\calS^{\de^{k-1}}+\pi^*\alpha^{\de^{k}}$.
\end{itemize}
Again, this will lead to a higher codimension version of the mCME
\begin{equation}
    \label{eq:higher_mCME}
    Q^{\de^{k-1}}\left(\calS^{\de^{k-1}}\right)=\pi^*\left(2\calS^{\de^k}-\iota_{Q^{\de^k}}\alpha^{\de^k}\right).
\end{equation}

\subsubsection{Example: classical codimension 2 theory}

Consider Yang--Mills theory as in \ref{subsubsec:Yang-Mills_theory}. Let $\Sigma_2\subset\Sigma$ be a codimension 2 stratum. The BV-BFV theory on $\Sigma$ and $\de\Sigma$ induces the following data associated on $\Sigma_2$: The space of fields
\[
\calF^{\de^2}_{\Sigma_2}=\Omega^{n-2}(\Sigma_2)\otimes\mathfrak{g}[n-2]\oplus\Omega^0(\Sigma_2)\otimes\mathfrak{g}[1]\ni(\mathbb{B}_2,\mathbbm{c}_2),
\]
with $\gh \mathbb{B}_2=0$ and $\gh \mathbbm{c}_2=1$. The $\mathrm{BF}^2\mathrm{V}$ symplectic form is given by  
\[
\omega^{\de^2}_{\Sigma_2}=\delta\alpha^{\de^2}_{\Sigma_2}=\int_{\Sigma_2}\tr\big(\delta\mathbb{B}_2\land\delta\mathbbm{c}_2\big),
\]
with 
\[
\alpha^{\de^2}_{\Sigma_2}=\int_{\Sigma_2}\tr\big(\mathbb{B}_2\land \delta\mathbbm{c}_2\big).
\]
The cohomological vector field is given by 
\[
Q^{\de^2}_{\Sigma_2}=\int_{\Sigma_2}\tr\bigg([\mathbb{B}_2,\mathbbm{c}_2]\land\frac{\delta}{\delta\mathbb{B}_2}+\frac{1}{2}[\mathbbm{c}_2,\mathbbm{c}_2]\land\frac{\delta}{\delta\mathbbm{c}_2}\bigg).
\]
The $\mathrm{BF}^2\mathrm{V}$ action (which one can obtain from $Q^{\de^2}_{\Sigma_2}$ by using the construction of \cite{Roytenberg2005}) is given by 
\[
\calS^{\de^2}_{\Sigma_2}=\int_{\Sigma_2}\tr\bigg(\frac{1}{2}\mathbb{B}_2\land[\mathbbm{c}_2,\mathbbm{c}_2]\bigg).
\]

\begin{rem}
The quantum extension is more difficult and requires certain algebraic constructions. 
Following the codimension 1 construction, one can try to formulate a similar procedure by considering a deformation quantization of Poisson structures with higher shifts. However, we will not consider a general coupling of higher codimension theories here, but rather describe the idea for quantization of the according BF$^k$V theory for a codimension $k$ stratum. It is expected that a coupling on each codimension for general geometric situations is possible.
\end{rem}

\subsection{Algebraic and geometric structure for the quantization in higher codimension}
\label{subsec:algebraic_structure_for_the_quantization_in_higher_codimension}

\subsubsection{Deformation Quantization Picture}
Let us denote by $\E_k$ the topological operad of little $k$-dimensional disks and let $\mathbb{P}_k$ denote the operad controlling $(1-k)$-shifted (unbouded) Poisson dg algebras \cite{Lurie2017}.
It is known that deformation quantization of $\mathbb{P}_1$-algebras corresponds to $\E_1$-algebras \cite{Kontsevich1999}, which is the same as an $A_\infty$-algebra (associative algebra).
The higher codimension picture for deformation quantization is related to the higher version of the \emph{Deligne conjecture} \cite{Kontsevich1999}. The Deligne conjecture, which is related to usual deformation quantization, states that there is a natural action of an $\E_2$-algebra over the category of chains complexes to the Hochschild cohomology generated by an arbitrary associative algebra (see e.g. \cite{McClureSmith1999,KontsevichSoibelman2000} for a proof). This can be generalized to the $\E_k$ operad. 
Using the fact that the $\E_k$ operad is formal in the category of chain complexes, i.e. equivalent to its homology and that its homology is given by the $\mathbb{P}_k$ operad, there is an equivalence between the $\E_k$ and the $\mathbb{P}_k$ operad  \cite{Tamarkin1998,Tamarkin2003,Kontsevich1999,FresseWillwacher2020}.
Thus for all $k\geq 2$, there exists a deformation quantization for a $\mathbb{P}_k$-algebra and there is a canonical Lie bracket $[\enspace,\enspace]_{\E_k}$ on an $\E_k$-algebra which corresponds to a $(1-k)$-shifted Poisson structure through the equivalence. One can then view $\calO_{loc}(\calF^{\de^k})[\![\hbar]\!]$ as an $\E_k$-algebra endowed with a dg structure induced by the differential 
\begin{equation}
Q^{\de^k}_\hbar:=\left[\calS^{\de^k}_\hbar,\enspace\right]_{\E_k}.
\end{equation} 
The higher shifted analogue of the quantum master equation is then given by 
\begin{equation}
    \label{eq:higher_QME}
    \left[\calS^{\de^k}_\hbar,\calS^{\de^k}_\hbar\right]_{\E_k}=0.
\end{equation}
Note that for $k=0$, we obtain a $+1$-shifted Poisson structure $(\enspace,\enspace):=[\enspace,\enspace]_{\E_0}$ as in the construction of a BV algebra, so there is an equivalence of the operad $\mathbb{BV}$, which controls the BV algebra structure, to the homology of the \emph{framed} $\E_0$-operad. Note the difference of having a framing, i.e. we also allow \emph{rotations} of little disks. The rotations indeed correspond to the BV Laplacian $\Delta$ in the homology of the operad over chain complexes.  

In the category of chain complexes we can observe a chain complex of observables $\calO_{loc}(\calF)$ with differential $Q=(\calS,\enspace)$. In fact, we will have the operations given by the differential $Q$ (degree $+1$), usual multiplication of functions (degree $0$), the Poisson structure $(\enspace,\enspace)$ (degree $+1$) and the BV Laplacian $\Delta$ (degree $+1$) such that for all $f,g\in\calO_{loc}(\calF)$ we have
\[
\Delta(fg)=\Delta fg+(-1)^{\gh f}f\Delta g+(-1)^{\gh f}(f,g).
\]
Deforming this operad will lead to the $\mathbb{BD}_0$ operad which is basically the same as the $\mathbb{BV}$ operad without emphasizing the BV Laplacian $\Delta$ but, for the purpose of QFT, include it into the differential and where everything is now over chain complexes of $\mathbf{K}[\![\hbar]\!]$-modules, i.e. we consider the algebra of deformed observables $\calO_{loc}(\calF)[\![\hbar]\!]$ together with the bracket $\hbar(\enspace,\enspace)$ and a differential $D$ of degree $+1$ such that for all $f,g\in\calO_{loc}(\calF)[\![\hbar]\!]$ we have 
\[
D(fg)=D fg+(-1)^{\gh f}fD g+(-1)^{\gh f}\hbar(f,g).
\]
In application to QFT we have $D=Q+\hbar\Delta$ and one can check that we indeed have $D^2=0$ (see also \cite{CostelloGwilliamVol2} for the construction of Beilinson--Drinfeld algebras in connection to QFT and factorization algebras). 

\begin{rem}
It is important to mention that the usual convention for the degrees in the definition of a BV algebra is the one where the Poisson bracket $(\enspace,\enspace)$ is of degree $-1$ and hence $\Delta$ is also of degree $-1$. In this case the $\mathbb{BV}$ operad is equivalent to the homology of the framed $\E_2$ operad (little $2$-disks) \cite{Getzler1994}.
However, for $\mathbb{BD}_0$-algebras one still requires that the bracket is of degree $+1$, so one assigns a weight of $+1$ to $\hbar$. 
\end{rem}

For $k=1$, this corresponds to usual deformation quantization \cite{CostelloGwilliamVol2}. In particular, for a $\mathbb{P}_1$-algebra $(A,\{\enspace,\enspace\})$, the lift to a $\mathbb{BD}_1$-algebra, which is flat over $\mathbf{K}[\![\hbar]\!]$, is the same as a deformation quantization of $A$ in the usual sense. Indeed, to describe $\mathbb{BD}_1$-structures on $A[\![\hbar]\!]$ compatible with the given $\mathbb{P}_1$-structure we can give an associative product on $A[\![\hbar]\!]$, linear over $\mathbf{K}[\![\hbar]\!]$, and which modulo $\hbar$ is given by the commutative product on $A$. The relations in the $\mathbb{BD}_1$ operad imply that the $\mathbb{P}_1$-structure on $A$ is related to the associative product on $A[\![\hbar]\!]$ by
\[
\frac{1}{\hbar}(f\star g- g\star f)=\{f,g\}\,\,\text{mod $\hbar$}.
\]

The higher codimension $k$ version of the quantum master equation follows the picture of $\mathbb{BD}_k$-algebras.

This construction is also consistent with the $k$-dimensional version of the \emph{Swiss-Cheese operad} \cite{Voronov1999,Kontsevich1999} $\mathbb{SC}_{k,k-1}$ which couples the $\E_k$ operad to the $\E_{k-1}$ operad \cite{Kontsevich1999,Thomas2016,Markarian2020} by an action of $\E_k$-algebras on $\E_{k-1}$-algebras. Describe it as an operad of sets. This colored operad has two colors: points may be in the bulk or on the boundary. The set of colors is a poset, that is a category, rather than a set, and there are only operations compatible with this structure. The Swiss-Cheese operad is important when dealing with the coupling in contiguous codimension. A \emph{Swiss-Cheese algebra} (i.e. an algebra over the Swiss-Cheese operad $\mathbb{SC}_{k,k-1}$) consists of a triple $(A,B,\rho)$ such that $A$ is a(n) (framed) $\E_k$-algebra, $B$ a(n) (framed) $\E_{k-1}$-algebra and $\rho\colon A\to \mathrm{HC}^\bullet_{\mathsf{Disk}_{k-1}^\mathrm{fr}}(B)$ the coupling action. 
Here $\mathrm{HC}^\bullet_{\mathsf{Disk}_{k-1}^\mathrm{fr}}(B)$ denotes the \emph{Hochschild cochain object} of the $\E_{k-1}$-algebra $B$.
In fact, $\mathrm{HC}^\bullet_{\mathsf{Disk}_{k-1}^\mathrm{fr}}(B)$ carries the structure of an $\E_k$-algebra. 
Thus, $\rho$ is a map of (framed) $\E_k$-algebras.

Unfortunately, it can be shown that the Swiss-Cheese operad is not formal \cite{Livernet2015} (still, one can show that there is a higher codimension version of the Swiss-Cheese operad which in fact is indeed formal \cite{Idrissi2018}). However, there is an equivalence of $\mathbb{SC}_{k,k-1}$ to a classical notion for coupling contiguous codimensions by the $\mathbb{P}_{k,k-1}$ operad (see \cite{MelaniSafronov2018_2} for the definition of $\mathbb{P}_{k,k-1}$ and detailed discussions). Let us briefly discuss the coupling in each codimension from the point of view of \emph{factorization homology for stratified spaces} as in \cite{AyalaFrancisTanaka2017}. 
A very good introduction to factorization homology in the topological field theory setting is \cite{Tanaka2020}.
Similarly, as the contiguous codimensions can be coupled together by the Swiss-Cheese operad through the coupling action $\rho$, we can extend this to a generalization for the coupling in each codimension. Denote by $\mathsf{Disk}^\mathrm{fr}_{d}$ the $\infty$-category with objects given by finite disjoint unions of framed $d$-dimensional disks and morphisms being smooth embeddings equipped with a compatibility of framings. A $\mathsf{Disk}_d^\mathrm{fr}$-algebra (or equivalently framed $\E_d$-algebra) $A$ in a monoidal symmetric $\infty$-category $\calC^\otimes$ is a symmetric monoidal functor $A\colon \left(\mathsf{Disk}_d^\mathrm{fr}\right)^{\coprod}\to \calC^\otimes$, where $\coprod$ denotes the symmetric monoidal structure given by disjoint union of topological spaces.
Denote by $\mathsf{Mnfld}_d^\mathrm{fr}$ the $\infty$-category whose objects are smooth framed $d$-dimensional manifolds and whose morphisms consists of all smooth embeddings equipped with a compatibility of framings. 

Let $\calC^\otimes$ be a symmetric monoidal $\infty$-category admitting all sifted colimits. Fix an $\E_d$-algebra in $\calC^\otimes$, which we will denote by $A$. Then \emph{factorization homology with coefficients in $A$} is the left Kan extension, 
\[
\begin{tikzcd}
\mathsf{Disk}^{\mathrm{fr}}_{d} \arrow[d,hook] \arrow[r, "A"]      & \mathcal{C}^\otimes \\
\mathsf{Mnfld}^{\mathrm{fr}}_{d} \arrow[ru, "\int A"', dashed] &            
\end{tikzcd}
\]
and for a framed $d$-manifold $X\in \mathsf{Mnfld}_d^{\mathrm{fr}}$, we denote by 
\[
\int_XA
\]
the \emph{factorization homology of $X$ with coefficients in $A$}. It is called \emph{homology} since it satisfies the \emph{Eilenberg--Steenrod axioms} \cite{EilenbergSteenrod1945} for a homology theory. Note that we can also fix a manifold $Y\in\mathsf{Mnfld}^\mathrm{fr}_{n-d-1}$ and consider a map 
\[
\int_Y\colon \mathsf{Alg}_{\mathsf{Disk}^\mathrm{fr}_n}(\mathcal{C}^\otimes)\to\mathsf{Alg}_{\mathsf{Disk}^\mathrm{fr}_{d+1}}(\calC^\otimes). 
\]
More general, for a symmetric monoidal $\infty$-category $\calC^\otimes$ and an $\infty$-category $\mathsf{B}$ of \emph{basics}, one can define the \emph{absolute factorization homology} to be the left adjoint to 
\[
\begin{tikzcd}
{\displaystyle\int\colon} \mathsf{Alg}_{\mathsf{Disk}(\mathsf{B})}(\calC^\otimes) \arrow[rr, dashed, bend left] &  & {\mathsf{Fun}^\otimes(\mathsf{Mnfld}(\mathsf{B}),\mathcal{C}^\otimes).} \arrow[ll, bend left]
\end{tikzcd}
\]
The basics for our construction are given by framed disks of a given dimension. Moreover, for our purpose we want to consider the category $\calC^\otimes$ to be given by chain complexes. 
For the situation where $\mathsf{B}=\mathsf{Disk}_{d\subset n}^{\mathrm{fr}}$ is defined such that $\mathsf{B}$-manifolds are framed $n$-dimensional manifolds with a framed $d$-dimensional properly embedded submanifold such that the framing splits along this submanifold we have the following:

For an $\E_n$-algebra $A$ and and $\E_d$-algebra $B$, we can observe an action \cite[Proposition 4.8]{AyalaFrancisTanaka2017}
\[
\rho\colon \int_{S^{n-d-1}}A\to \mathrm{HC}^\bullet_{\mathsf{Disk}^{\mathrm{fr}}_d}(B),
\]
where $S^{n-d-1}$ denotes the $(n-d-1)$-sphere.
Here we are considering a $d$-dimensional defect sitting inside an $n$-manifold as locally described by the basics category $\mathsf{Disk}^\mathrm{fr}_{d\subset n}$.
Note that $\rho$ has to be a map of $\E_{d+1}$-algebras.
This can be regarded as a \emph{higher} version of Deligne's conjecture.
In particular, for $d=n-1$, we can obtain the action given by 
\[
\rho\colon \int_{S^0}A=A\otimes A^{\mathrm{op}}\to \mathrm{HC}^\bullet_{\mathsf{Disk}^{\mathrm{fr}}_{n-1}}(B). 
\]
Note that this is not exactly the Swiss-Cheese case since here the domain is given by $A\otimes A^\mathrm{op}$ where as in the Swiss-Cheese case it is given by $A$. Intuitively, this is because the opposite half of the bulk is missing (see Figure \ref{fig:difference}).

\begin{figure}[ht]
\centering
\begin{tikzpicture}[scale=0.6]
\draw[color=gray!20,fill=gray!20] (-2,0) rectangle (-0.1,4);
\draw[color=gray!20,fill=gray!20] (4,0) rectangle (5.9,4);
\draw[color=gray!20,fill=gray!20] (0.1,0) rectangle (1.9,4);
\draw (0,0) -- (0,4);
\draw (6,0) -- (6,4);
\node[label=below: $A$] (A) at (-1,2.5) {};
\node[label=below: $A^\mathrm{op}$] (Aop) at (1,2.5) {};
\node[label=above: $B$] (B) at (0,4) {};
\node[label=below: $A$] (A2) at (5,2.5) {};
\node[label=above: $B$] (B2) at (6,4) {};
\end{tikzpicture}
\caption{Difference between the defect situation and the boundary situation.} 
\label{fig:difference}
\end{figure}
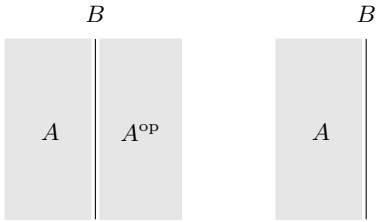

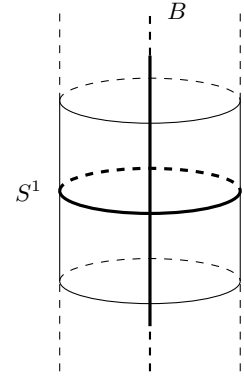
\begin{figure}[ht]
\centering
\begin{tikzpicture}[scale=0.6]
\draw (-3,0) arc (180:360:2cm and .5cm);
\draw[dashed] (-3,0) arc (180:0:2cm and .5cm);
\draw[very thick] (-3,2) arc (180:360:2cm and .5cm);
\draw[dashed, very thick] (-3,2) arc (180:0:2cm and .5cm);
\draw (-3,4) arc (180:360:2cm and .5cm);
\draw[dashed] (-3,4) arc (180:0:2cm and .5cm);
\draw (-3,0) -- (-3,4);
\draw (1,0) -- (1,4);
\draw[dashed] (-3,4) -- (-3,6);
\draw[dashed] (-3,-2) -- (-3,0);
\draw[dashed] (1,4) -- (1,6);
\draw[dashed] (1,-2) -- (1,0);
\draw[very thick] (-1,-1) -- (-1,5);
\draw[dashed,thick] (-1,-2) -- (-1,-1);
\draw[dashed,thick] (-1,5) -- (-1,6);
\node[label=left: $S^1$] (S1) at (-3,2) {};
\node[label=right: $B$] (B) at (-1,6) {};
\end{tikzpicture}
\caption{Example of a 1-dimensional defect $B$ sitting in $\R^3$. The orthogonal sphere is given by $S^1$ through homotopy equivalence of the tubular neighborhood around it.} 
\label{fig:tubular_neighborhood_situation}
\end{figure}

A nice interpretation for the integration was also given in \cite{Francis2012,CapacciTetik}. The integral $\int_{S^{n-d-1}}A$ is by definition given by $\int_{S^{n-d-1}\times \R^{d+1}}A$, hence an $\E_{d+1}$-algebra since $S^{n-d-1}\times\R^{d+1}$ is an $\E_{d+1}$-algebra by considering the space of embeddings 
\[
\coprod_IS^{n-d-1}\times\R^{d+1}\to S^{n-d-1}\times\R^{d+1}.
\]
This is in fact true for any $(n-d-1)$-manifold not just for $S^{n-d-1}$. Thus, one may think of the action of $A$ on the defect $B$ to be given by a $\int_{S^{n-d-1}}A$-module structure. Consider the local situation when there is a defect $B\subset \R^d$ sitting in $\R^n$. We can think of $S^{n-d-1}$ as the \emph{orthogonal sphere} and $S^{n-d-1}\times \R^{d+1}$ as some tubular neighborhood around $\R^d$ without $\R^d$. Then $\int_{S^{n-d-1}}A$ is the global algebra of observables on this neighborhood. In this interpretation, the action by $\int_{S^{n-d-1}}A$ is equivalent to restricting the bulk algebra to an infinitesimal neighborhood of $B$ where it acts (see Figure \ref{fig:tubular_neighborhood_situation} for an illustration).

For $d=n-2$ we get an action 
\[
\rho\colon\int_{S^1}A=A\otimes_{A\otimes A^{\mathrm{op}}}A\to \mathrm{HC}^\bullet_{\mathsf{Disk}_{n-2}^\mathrm{fr}}(B).
\]
In fact, in \cite{Francis2012,AyalaFrancis2015} it has been proven that for any symmetric monoidal category $\calC^\otimes$ there is an equivalence
\[
\mathrm{Mod}_A^{\mathsf{Disk}^\mathrm{fr}_{n-d}}(\calC^\otimes)\cong \mathrm{Mod}_{\int_{S^{n-d-1}}A}(\calC^\otimes), 
\]
which is used in the proof of \cite[Proposition 4.8]{AyalaFrancisTanaka2017}.
Note that in this picture, the $d$ in $\E_d$ denotes the dimension of the corresponding submanifold (defect). For a classical theory on an $n$-manifold, the codimenion $k$ submanifolds of dimension $d=n-k$, give rise to $\E_d$-algebras, and are endowed with an additional $\mathbb{P}_{k}$-structure. In this case (i.e. with additional structure) all of the previous constructions hold but there will be extra constrains due to the $\mathbb{P}_k$-structure. In particular, it is not clear how the action is expressed in presence of additional structure.

\begin{rem}
An important remark at this point is to emphasize that above mentioned construction for the action of a different codimension is in fact just pairwise. In order to describe the coupling action in each codimension (i.e. not just pairwise) some modification has to be made.
This means that a geometric situation as e.g. in Figure \ref{fig:corner} is not directly covered by the constructions of \cite{AyalaFrancisTanaka2017}. 
\begin{figure}[ht]
\centering
\begin{tikzpicture}[scale=0.6]
\tikzset{Bullet/.style={fill=black,draw,color=#1,circle,minimum size=0.5pt,scale=0.5}}
\node[Bullet=black, label=right: $\de^2$] (c) at (4,0) {};
\node[label=below: $\de^1$] (d0) at (2,0) {};
\node[label=right: $\de^1$] (d0) at (4,2) {};
\draw[thick] (0,0) -- (4,0);
\draw[thick] (4,0) -- (4,4);
\draw[color=gray!20,fill=gray!20] (-1,0.1) rectangle (3.9,5);
\node[color=black] (d0) at (2,2) {$\de^0$};
\end{tikzpicture}
\caption{A corner situation.} 
\label{fig:corner}
\end{figure}
Note that the coupling depends on the given geometry and how the defects are relative to each other. There has to be some compatibility of the action of the $\de^0$ part to the $\de^1$ parts where these have to have again a coupling to the $\de^2$ part. This has to be compatible with the action of the $\de^0$ to the $\de^2$ part. However, the extension to a situation as in Figure \ref{fig:corner} is known and a coupling on a general stratified space is expected to be possible \cite{CapacciTetik}.
\begin{figure}[ht]
\centering
\begin{tikzpicture}[scale=0.6]
\tikzset{Bullet/.style={fill=black,draw,color=#1,circle,minimum size=0.5pt,scale=0.5}}
\node[Bullet=black] (1) at (0,1) {};
\node[Bullet=black] (2) at (6,3) {};
\draw[thick] (-2,0) -- (8,0);
\filldraw[color=black!,fill=gray!20] (2,3) circle (1cm);
\draw (6,1) circle (0.5cm);
\end{tikzpicture}
\caption{Situation of defects sitting inside some $n$-manifold for which the mentioned construction applies.} 
\label{fig:defects}
\end{figure}
\end{rem}

\subsubsection{Geometric Quantization Picture}
The phase space of an $n$-dimensional classical field theory associated to a closed $d$-dimensional submanifold is endowed with a $(n-d-1)$-shifted symplectic structure \cite{PantevToenVaquieVezzosi2013}. This also makes sense for the case when $d=n$, i.e. for the case of a $(-1)$-shifted symplectic structure which corresponds to the case of the BV formalism. Let us briefly consider the example of 3-dimensional classical Chern--Simons theory as in \ref{subsubsec:Chern-Simons} \cite{Safronov2020}. For a compact Lie group $G$ we associate the phase space which is given by the moduli space of flat $G$-connections. If $\Sigma$ is a closed oriented $3$-manifold, the phase space is given by the critical locus of the Chern--Simons functional, so it carries the induced BV symplectic structure. If $\Sigma$ is a closed oriented 2-manifold, the phase space is endowed with the Atiyah--Bott symplectic structure \cite{AtiyahBott1983}. If $\Sigma=S^1$ (1-dimensional closed compact oriented manifold), the phase space is the stack of conjugacy classes $[G/G]$ and the corresponding 1-shifted symplectic structure is given in terms of the canonical 3-form on $G$. If $\Sigma=\mathrm{pt}$ (0-dimensional case), the phase space is given by the classifying stack $\mathrm{B}G=[\mathrm{pt}/G]$ and the corresponding 2-shifted symplectic structure is given in terms of the invariant symmetric bilinear form on the Lie algebra $\mathfrak{g}:=\mathrm{Lie}(G)$ used in the definition of Chern--Simons theory (Killing form).

The analog of geometric quantization for $k$-shifted symplectic structures uses the notion of \emph{higher categories} \cite{Lurie2017} and \emph{derived algebraic geometry} \cite{Toen2014,PantevToenVaquieVezzosi2013}. It was recently shown that it corresponds to the notion of an $(\infty,k)$-category \cite{Safronov2020}. Let us give a bit more insights on this construction. Recall that the data for geometric quantization of a symplectic manifold $(M,\omega)$, is given by a \emph{prequantization}, i.e. a line bundle (usually called \emph{prequantum line bundle}) $(\mathscr{L},\nabla)$ together with a connection on $M$ with curvature $\omega$, and a \emph{polarization}, i.e. a Lagrangian foliation $\mathscr{F}\subset TM$ which is a subbundle closed with respect to the Lie bracket of vector fields which is Lagrangian with respect to $\omega$ \cite{Kir85,Wood97,Moshayedi2020_geomquant}. The constructed vector space is then given by the space of $\nabla$-flat sections $\Gamma_{\mathrm{flat}}(M,\mathscr{L})\subset\Gamma(M,\mathscr{L})$ of $\mathscr{L}$ along the foliation $\mathscr{F}$. In the $k$-shifted case, one defines the analog of a prequantum line bundle, called a \emph{prequantum $k$-shifted Lagrangian fibration}, to be given by a $k$-gerbe $\mathscr{G}$ on the base manifold together with an extension of the natural relative flat connection on the pullback of $\mathscr{G}$ to the fiber to a connective structure $\nabla$. The polarization is encoded in a $k$-shifted Lagrangian foliation (see \cite{ToenVezzosi2020,BorisovSheshmaniYau2019}). For a 1-shifted symplectic structure we get the $\infty$-category $\mathrm{QCoh}^\mathscr{G}$ of twisted quasi-coherent sheaves on the base and for the 2-shifted case the $\infty$-category of quasi-coherent sheaves on a certain category and so on, such that the output for $k$-shifted structures gives an $(\infty,k)$-category (see \cite{Safronov2020} for a detailed construction).

Note that this indeed is compatible with the quantum BV-BFV construction which assigns a chain complex $(\calH,\Omega)$ to the 0-shifted case on the boundary (rather than a vector space). It is also compatible with the BV construction in the bulk, i.e. for a $(-1)$-shifted symplectic structure. Namely, the constructions as in \cite{S} give rise to a ``geometric quantization''. The expectation value, assigning to an observable a value in the ground field through a path integral of a half-density (usually $\exp(\I\calS/\hbar)\sigma$ where $\sigma$ is some $\Delta$-closed reference half-density) over a chosen Lagrangian submanifold (gauge-fixing), is considered as some analog for the polarization. This can be seen by an extension of the result in \cite{Severa2006}. The original form states that for an odd symplectic supermanifold $(\calF,\omega)$ there is a quasi-isomorphism between the complex $(\Omega^\bullet(\calF),\omega\land)$, consisting of differential forms on $\calF$ endowed with the differential given by wedging with $\omega$, and the complex $(\mathrm{Dens}^{\frac{1}{2}}(\calF),\Delta)$, consisting of half-densities on $\calF$ endowed with the BV Laplacian (This is obviously related to the BV theorem as in \cite{S}). 
Moreover, it states that the de Rham differential vanishes on the cohomology of $(\Omega^\bullet(\calF),\omega\land)$ and that the BV Laplacian is given by $\Delta=(\dd\circ(\omega\land)^{-1}\circ\dd)$.
In fact, it is the third differential in the spectral sequence of the bicomplex $(\Omega^\bullet(\calF),\omega\land,\dd)$ and all higher differentials are zero. The shifted analog states a similar quasi-isomorphism \cite[Proposition 2.34]{Safronov2020}. 
Hence, we want to extend the outcome to ``dg $(\infty,k)$-categories''.
Let us first talk about the 1-categorical picture and give an informal construction for it. 
One can regard a dg $k$-category to be a $k$-category $\mathcal{C}$ \cite{Baez2005} for which each set of morphisms $\Hom(X,Y)$ between two objects $X,Y\in \mathcal{C}$ forms a dg module, i.e. it is given by a direct sum
\[
\Hom(X,Y)=\bigoplus_{n\in \mathbf{Z}}\Hom_n(X,Y),
\]
endowed with a differential
\[
\dd_\mathcal{C}^{(X,Y)}\colon\Hom_n(X,Y)\to \Hom_{n+1}(X,Y). 
\]
Composition of morphisms is given by maps of dg modules
\[
\Hom(X,Y)\otimes \Hom(Y,Z)\to \Hom(X,Z),\quad \forall X,Y,Z\in \calC
\]
satisfying some additional relations \cite{Drinfeld2004,Toen2011}.
One should think of $\Hom$ as the space of 1-morphisms. Denote by $\Hom^{(k)}$ the space of $k$-morphisms, which again forms a dg module. Below an illustration of a 2-morphism $\alpha$ between morphisms $f,g\in \Hom^{(1)}(X,Y)$ and a 3-morphism $\Gamma$ between two 2-morphisms $\alpha,\beta$.
\begin{center}
\begin{tikzcd}
  X \arrow[rr, bend left, "f", ""{name=U,inner sep=1pt,below}]
  \arrow[rr, bend right, "g"{below}, ""{name=D,inner sep=1pt}]
  & & Y
  \arrow[Rightarrow, from=U, to=D, "\alpha"]
\end{tikzcd}\qquad
\begin{tikzcd}[column sep=3cm]
X 
  \arrow[bend left=50]{r}[name=U,below]{}{f} 
  \arrow[bend right=50]{r}[name=D]{}[swap]{g}
& 
Y 
  \arrow[Rightarrow,to path={(U) to[out=-150,in=150] node[auto,swap] {$\scriptstyle\alpha$} coordinate (M) (D)}]{}
  \arrow[Rightarrow,to path={(U) to[out=-30,in=30] node[auto] {$\scriptstyle\beta$} coordinate (N)  (D)}]{}
  \arrow[Rightarrow,to path={([xshift=4pt]M) -- node[auto] {$\scriptstyle\Gamma$} ([xshift=-4pt]N)}]{}
\end{tikzcd}
\end{center}
We require that they satisfy the same conditions as $\Hom=\Hom^{(1)}$, i.e. for two $(k-1)$-morphisms $f,g$, we want that the space of $k$-morphisms between them is given by a direct sum
\[
\Hom^{(k)}(f,g)=\bigoplus_{n\in\mathbf{Z}}\Hom_n^{(k)}(f,g)
\]
endowed with a differential
\[
\dd_\mathcal{C}^{(f,g)}\colon \Hom^{(k)}_n(f,g)\to \Hom^{(k)}_{n+1}(f,g).
\]
The composition of $k$-morphisms should then, similarly as for higher categories, satisfy some Stasheff pentagon identity \cite{Stasheff1963No1,Stasheff1963No2}. 
Formally, one can construct a \emph{strict} dg $k$-category as an iteration (similarly as for defining higher categories) of enrichments over the category of chain complexes. That is, one defines a dg $k$-category as a category enriched over dg $(k-1)$-categories.
This is more or less straightforward since the category of chain complexes can be endowed with a symmetric monoidal structure. The more interesting notion in this setting is the \emph{non-strict} version. There one has to start with an $\infty$-category for the enrichment instead of just chain complexes. Although there should not be any obstacles in the construction, this will be rather involved. The diagram below illustrates the quantization for higher codimensions where we denote by $\mathsf{Ch}$ the category of chain complexes, by 
$\mathsf{Alg}_{\mathbb{P}_k}^{\E_d}(\mathsf{Ch}):=\mathsf{Alg}_{\mathbb{P}_k}(\mathsf{Alg}_{\E_d}(\mathsf{Ch}))$
the category of $\mathbb{P}_k$-algebras over $\E_d$-algebras in $\mathsf{Ch}$ and by $\mathsf{dgCat}_{(\infty,k)}$ the category of dg $(\infty,k)$-categories.
One should think of the horizontal arrows as passing to higher codimension and not as a functor in particular. The quantum picture on the level of deformation quantization focuses on the algebraic structure (shifted Poisson structure) on the space of (higher codimension) observables, whereas the picture on the level of geometric quantization focuses on the geometric structure induced by the space of (higher codimension) boundary fields, namely its (shifted) symplectic manifold structure.

\begin{widetext}
\begin{center}
\begin{tikzcd}
\hline \\
 \hdots& \arrow[l]\de^k & \arrow[l]\hdots & \arrow[l]\de^2& \arrow[l]\de^1& \arrow[l]\de^0\\
\hline\\
\boxed{\mathbf{DefQuant}}\hspace{0.2cm} \hdots & \mathsf{Alg}_{\mathbb{BD}_k}^{\E_{n-k}}(\mathsf{Ch})\arrow[l]\arrow[d] & \hdots\arrow[l] &\mathsf{Alg}_{\mathbb{BD}_2}^{\E_{n-2}}(\mathsf{Ch})\arrow[l] \arrow[d] & \mathsf{Alg}_{\mathbb{BD}_1}^{\E_{n-1}}(\mathsf{Ch}) \arrow[l] \arrow[d]
& \mathsf{Alg}_{\mathbb{BD}_0}^{\E_n}(\mathsf{Ch})\arrow[l] \arrow[d, ""] \\
\boxed{\mathbf{GeomQuant}}\hspace{0.2cm} \hdots & \mathsf{dgCat}_{(\infty,k-1)} \arrow[l] & \hdots \arrow[l]& \mathsf{dgCat}_{(\infty,1)} \arrow[l]& \mathsf{Ch} \arrow[l, ""]
& \arrow[l]\mathbf{K}
\end{tikzcd}
\end{center}
\end{widetext}

\subsection{Obstruction spaces}
Recall that for codimension 0 theories the quantum obstruction space was given by the first cohomology group with respect to the cohomological vector field in the bulk (Theorem \ref{thm:obstruction_BV}) and for coboundary 1 theories it was given by the second cohomology group with respect to the cohomological vector field on the boundary (Theorem \ref{thm:obstruction_BFV}).
A natural question is whether the obstruction space for the quantization of codimension $k$ theories is given by \[\mathrm{H}^{k+1}_{Q^{\de^k}}(\calO_{loc}(\calF^{\de^k})).\] This is not clear at the moment. 
We plan to consider this more carefully in the future.

\begin{acknowledgements}
The author would like to thank Alberto Cattaneo, Pavel Mnev, Nicola Capacci, \"Od\"ul Tetik and Eugene Rabinovich for several discussions on this topic. Moreover, he would like to thank Pavel Safronov for discussions and ideas about the algebraic and geometric constructions for higher codimensions. We hope to describe the algebraic ideas of Section \ref{subsec:algebraic_structure_for_the_quantization_in_higher_codimension}, especially the bulk-boundary coupling, in another paper by a precise mathematical formulation. 
Special thanks goes to the anonymous referee who gave valuable comments and suggestions for the improvement of the paper.
This research was supported by the NCCR SwissMAP, funded by the Swiss National Science Foundation, and by the SNF grant No. 200020\_192080.
\end{acknowledgements}

\newpage
\bibliographystyle{plainurl}
\bibliography{AKSZPSMbibliography}

\end{document}